\newtheorem{theorem}{Theorem}
\newtheorem{corollary}{Corollary}
\newtheorem{proposition}{Proposition}
\newtheorem{lemma}{Lemma}
\newtheorem{remark}{Remark}
\newcommand{\dsp}{\displaystyle}
\newcommand{\R}{\mathbb{R}}
\numberwithin{equation}{section}
\begin{document}
\title{Efficacy of a magnetic shield against a Vlasov-Poisson plasma}
\author{Silvia Caprino*,  Guido Cavallaro$^{+}$,   Carlo Marchioro$^{++}$}
\maketitle
\begin{abstract}
The aim of the paper is to test
on a simple model how impenetrable may be a magnetic shield.
We study the time evolution of a single species positive plasma, confined in the half-space $x_1>0$. 
The confinement is the result of a balance between a magnetic field and an external
field, both singular at $x_1=0$; the magnetic field forbids the entrance of plasma particles in the region
$x_1 \leq 0$, whereas the external field attracts  the plasma particles towards $x_1=0$.
The plasma has finite total charge and velocities distributed with a Maxwell-Boltzmann law. 
\end{abstract}
\textit{Key words}: Vlasov-Poisson equation, magnetic shield,  gaussian velocities

\noindent
\textit{Mathematics  Subject  Classification}: 35Q99, 76X05.

\footnotetext{*Dipartimento di Matematica Universit\`a Tor Vergata, via della Ricerca Scientifica, 00133 Roma (Italy), 
caprino@mat.uniroma2.it}
\footnotetext{$^+$Dipartimento di Matematica Universit\`a  La Sapienza, p.le A. Moro 2,  00185 Roma (Italy),  
cavallar@mat.uniroma1.it}
\footnotetext{$^{++}$International  Research Center M$\&$MOCS (Mathematics and Mechanics of Complex Systems),  
marchior@mat.uniroma1.it}

\section{Introduction and main result}

It is well known that a magnetic field that becomes very strong (or at the limit diverges) on a surface can forbid the entrance of a point charged particle. In recent papers  \cite{KRM12, SIAM, REND, TORO, JSP17, GH, K, ST1, ST2} it has been proved that a similar effect can happen for a Vlasov-Poisson plasma, for which  some part of the charged plasma could be pushed a priori against the surface by the other particles of the plasma. That is, the magnetic shield effect can be stronger than the internal interaction of the plasma. In the present paper we show that the same happens when the surface attracts the plasma, also when this attraction is very strong.
The analysis of the magnetic shield effect depends on the geometry of the surface and the total charge of the plasma. 
We discuss a simple case, but its extension to more complicated cases may be possible.

\noindent We consider how to defend  the half-space $\{ x\in {\mathbb{R}}^3: x_1<0\}$
from a Vlasov-Poisson plasma put in $x_1 >0$, by means of a magnetic shield.  The particles of the plasma are attracted towards the surface
$x_1=0$ by a force whose potential is 
\begin{equation}
U(x)=-\frac{1}{x_1^\mu} \qquad \textnormal{for}\,\, 0<x_1<1 \quad \textnormal{with} \,\,\,\, \mu>0,
\label{campo_W}
\end{equation}
vanishing for $x\geq 2$   and smooth for any $x_1>0$. We want to contrast this effect by means of a magnetic field, keeping the plasma outside the half-space.
We choose the magnetic field of the form 
\begin{equation}
B(x)= \left(0,0, h(x_1)\right)
\label{campo_B}
\end{equation}
with 
\begin{equation}
h(x_1) = \frac{1}{x_1^\tau}
\label{campo_B2}
\end{equation}
for $ 0<x_1<1$,     $ h(x_1)$  smooth for any $x_1>0$  and vanishing for $x\geq 2$  (such assumption, as for the field $U$, is made for
mathematical simplification only, but it could be removed).
We will show that,  with a suitable $\tau > \mu$, the plasma cannot penetrate the region $x_1 \leq 0.$
 In other words, for any attractive external field of the form (\ref{campo_W}) we can find
 a magnetic field of the form (\ref{campo_B2}) in order to obtain the shield effect.
We assume that 
\begin{equation}
\frac{\mu+1}{\tau-1}<\frac49, \qquad \textnormal{that is} \qquad
 \tau > \frac94 \mu + \frac{13}{4}.
 \label{rel_taumu}
 \end{equation}
A simple physical model which motivates a field of the form (\ref{campo_W}), in 
the coulomb case, could be to put in the plane $x_1=0$ some negative point charges, fixed at some points, whose effect is
similar to a potential of the form (\ref{campo_W}) for $\mu=1$, with the addition of transverse
components (with respect to the $x_1$-axis) of the force. We have used the potential  (\ref{campo_W})
for simplicity,  showing in the Appendix the confinement property in case of a negative charge  fixed at the origin.

\medskip

The main difficulty we face in this paper is concerned with the fact that, due to the presence of the external attracting field, the total energy of the system is not positive definite in sign. This problem, differently from preceding works  \cite{KRM16}, \cite{JSP17} and \cite{CCM18}, forces us to consider integrable space distributions of particles for the Vlasov-Poisson plasma. Even in this simplified case, the energy conservation does not give an immediate a priori bound on the total kinetic energy of the system which, as it is well known, would imply a bound on some $L^p$ norm of the spatial density. This is due to both the attractiveness and the divergence of the external field. Our plan is to deduce some estimates for a cutoff problem, in which the particle velocities are assumed to be initially bounded, and then to prove the existence of the time evolution of the system, by taking the limit as the cutoff goes to infinity. 
In conclusion, the new Vlasov-Poisson system differs from the ones considered in previous papers (see \cite{KRM16}, \cite{JSP17} and \cite{CCM18})
 for the presence of a new term, that changes the 
energy (no more positive definite). We show that we are able to control this effect.

The plan of the paper is the following. 
Section 2 is devoted to this aim, Section 3 contains our main estimate on the self-induced electric field and finally in Section 4 some technical lemmas are proven.
\bigskip

Let   $x=(x_1,x_2,x_3)$,   $v=(v_1,v_2,v_3)$, and $f(x,v,t)$  be the distribution function of the plasma in the phase space 
$(x,v)$ at time $t$, with finite   electric charge. It evolves via the equations of motion
\begin{equation}
\label{Eq}
\left\{
\begin{aligned}
&\partial_t f(x,v,t) + v\cdot \nabla_x f(x,v,t) \, + \\
&\left[E(x,t)+v \wedge B(x)-\nabla U(x)\right] \cdot \nabla_v f(x,v,t)=0 \\
&E(x,t)=\int_{\R^3 } \frac{x-y}{|x-y|^3} \, \rho(y,t) \, dy     \\
&\rho(x,t)=\int_{\R^3} f(x,v,t) \, dv \\
&f(x,v,0)=f_0(x,v)\geq 0,  \qquad  x\in {\R^3 },  \qquad v\in\R^3,  
\end{aligned}
\right.
\end{equation}
where  $\rho$ is the spatial density and  $E$ is the electric field generated by the plasma.

\noindent System (\ref{Eq}) shows that $f$ is time-invariant along the solutions of the so called characteristics equations:
\begin{equation}
\label{ch} 
\begin{cases}
\dsp  \dot{X}(t)= V(t)\\
\dsp  \dot{V}(t)=  E(X(t),t)+V(t) \wedge B(X(t)) - \nabla U(X(t))\\
\dsp (X(0), V(0))=(x,v)  \\
\dsp f(X(t), V(t), t) = f_0(x,v)
 \end{cases}
\end{equation}
where we  use the simplified notation
\begin{equation}
 \label{2.8}
(X(t),V(t))= (X(t,t',x,v),V(t,t',x,v)) 
 \end{equation}
 to represent a characteristic at time $t$ passing at time $0\leq t'<t$ through the point $(x,v)$. 
 It is well known that along (\ref{ch})
the partial differential equation (\ref{Eq}) transforms into an ordinary differential equation, hence
a result of existence and uniqueness of solutions to (\ref{ch}) implies the same result for solutions to (\ref{Eq}),
with regularity properties depending on the regularity of $f_0$.
Since $f$ is time-invariant along this motion,
 \begin{equation}
 \label{2.9}
\| f(t)\|_{L^\infty}= \| f_0 \|_{L^\infty}
 \end{equation}
 moreover this dynamical system preserves the measure of the phase space (Liouville's theorem)
 and the $L^1$ norm of $f$, that is
 \begin{equation}
 \int  f(x,v,t) \, dx dv  = \int  f_0(x,v) \, dx dv = \textnormal{total charge}.
 \end{equation}

\noindent The previous dynamical system has been studied in several papers;  for the
main classical results on the Vlasov-Poisson system in three dimension see  \cite{LP, Pf},
for a review see \cite{G},
while recent results are quoted in \cite {JSP17}. 

\noindent In the following we will concentrate on the study of system (\ref{ch}).

\medskip

\noindent Denote by $\Lambda= \{ x \in {\mathbb{R}}^3:  x_1>A>0 \}$, where the constant $A$ can
be taken arbitrarily small;  we will prove the following 
\begin{theorem}
Let us fix an arbitrary positive time $T$. Let  $f_0(x,v)$ be a bounded, integrable function with support in
$\Lambda \times {\mathbb{R}}^3$,
 satisfying the following bound in the velocities
\begin{equation}
\label{exp}
0\leq f_0(x,v) \leq  C_1 e^{-\lambda v^2} 
\end{equation}
for some positive constants $C_1$,    $\lambda$.

\noindent Then there exists  a solution to system  (\ref{ch}) on  $[0,T]$  such that
$
X_1(t)>0
$
  for any 
$(x,v) \in \Lambda \times {\mathbb{R}}^3$.

Moreover there exist  positive constants $C_2$,   $\lambda_1$,    such that, for any $t\in [0, T]$,
$f(x,v,t)$ remains bounded and integrable in $(x, v)$,  and
\begin{equation}
\label{exp 1}
0\leq f(x,v,t) \leq C_2 e^{-\lambda_1 v^2}.
\end{equation}
This solution is unique in the class of bounded, integrable in $(x,v)$, functions satisfying (\ref{exp 1}).

\label{th_01}
\end{theorem}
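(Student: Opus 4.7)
My plan is to construct the solution as the limit of cutoff approximations and to derive the confinement from two pointwise identities along the characteristics that hold uniformly in the cutoff. Introduce $f_0^N = f_0 \chi_{\{|v|\leq N\}}$ and consider the associated Vlasov--Poisson--magnetic system: the initial data are compactly supported in velocity, so as long as the characteristics stay away from $x_1 = 0$, standard theory yields a smooth solution $f^N$ on $[0,T]$. The goal is to establish three bounds \emph{independent of $N$}: confinement $X_1^N(t) \geq \alpha_T > 0$, the Maxwellian estimate (\ref{exp 1}), and an a priori pointwise bound on the self-induced field $E^N$.

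For the confinement, I exploit that both $B$ and $U$ depend only on $x_1$. The equation for $V_2$ reads $\dot V_2 = E_2 - h(X_1)V_1$; letting $\Phi$ be a primitive of $h$ (so that $\Phi(x_1) \sim -x_1^{1-\tau}/(\tau-1)$ as $x_1\to 0^+$), one integrates to obtain
\[
V_2(t) + \Phi(X_1(t)) = v_2 + \Phi(x_1) + \int_0^t E_2(X(s),s)\,ds.
\]
Since the magnetic force is orthogonal to $V$, the energy equation reads
\[
\tfrac12 |V(t)|^2 + U(X(t)) = \tfrac12 |v|^2 + U(x) + \int_0^t E(X(s),s)\cdot V(s)\,ds.
\]
Using $|V_2|\leq |V|$ in the first identity and substituting into the second, one compares $\Phi(X_1)^2 \sim X_1^{2(1-\tau)}$ with $|U(X_1)| = X_1^{-\mu}$. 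The condition $\tau > \mu/2 + 1$, which is much weaker than (\ref{rel_taumu}), already makes the magnetic term blow up faster near $x_1 = 0$ and creates a forbidden region, provided the field-work integrals are controlled.

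Controlling those integrals is the main obstacle, and is the content of Section 3. The strategy is to bootstrap: from $\|f^N\|_{L^\infty \cap L^1}$ and the Maxwellian tail one produces an $L^p$ bound on $\rho^N$ which, through the Coulomb kernel, yields $|E^N(x,t)|\leq K(T)$ uniformly in $N$; the full strength of (\ref{rel_taumu}) is needed to close this bootstrap, because the attractive potential tries to concentrate mass against the shield and only a sufficiently strong magnetic barrier keeps the density from blowing up. Once $|E^N|\leq K(T)$ is in hand, the two identities above deliver both the lower bound $X_1^N(t)\geq \alpha_T$ and the velocity estimate $|V(t)|^2 \leq C_T(|v|^2 + 1)$; combined with $f^N(X(t),V(t),t)=f_0^N(x,v)\leq C_1 e^{-\lambda v^2}$, the latter propagates (\ref{exp 1}) with constants depending only on $T$.

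The passage $N\to \infty$ is then routine: the uniform bounds on $f^N$, $\rho^N$ and $E^N$, together with the equicontinuity of the characteristics on $[0,T]$, yield a limit $f$ solving (\ref{ch}) and inheriting confinement and (\ref{exp 1}). For uniqueness, given two solutions in the stated class, a Gr\"onwall estimate on the difference of their characteristics closes thanks to the uniform field bound and the Maxwellian tail, which dominates the large-velocity contributions to the Coulomb integral.
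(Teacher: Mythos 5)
Your overall architecture (velocity cutoff $f_0^N$, a priori estimates, confinement via the primitive of $h$, limit $N\to\infty$) matches the paper's, but the proposal rests on a claim that the paper never establishes and that is, in fact, the central difficulty: a pointwise bound $|E^N(x,t)|\leq K(T)$ \emph{uniform in $N$}. You assert this follows from a ``bootstrap'' using $\|f^N\|_{L^1\cap L^\infty}$ and the Maxwellian tail. It does not: to bound $E$ pointwise for 3D Vlasov--Poisson one needs control of the velocity support (or high moments of $\rho^N$), and here the total energy is not positive definite because $U=-x_1^{-\mu}$ is unbounded below, so conservation of energy gives no a priori kinetic-energy bound --- the paper flags exactly this as the main obstruction. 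What the paper actually proves is much weaker and $N$-dependent: ${\mathcal{V}}^N(T)\leq CN$, $\rho^N\leq CN^{3\gamma'}$, $|E^N|\leq C({\mathcal{V}}^N)^{4/3+\mu/(3(\tau-1))}$, together with the genuinely hard estimate (Proposition \ref{field}) that the \emph{time integral} of $|E^N|$ along a characteristic is bounded by $C({\mathcal{V}}^N)^\gamma$ with $\gamma<2/3$. That sublinear time-averaged bound --- proved by the Pfaffelmoser-type phase-space decomposition of Section \ref{sec_estE}, the lemmas on $V_3$ and $V_\perp$, and the induction over time scales $\Delta_\ell$ --- is what closes the inequality $|V^N(t)|^2\leq N^2+C({\mathcal{V}}^N)^{\gamma+1}+C({\mathcal{V}}^N)^{\frac{\mu+1}{\tau-1}+1}$ and makes the confinement estimate (\ref{min_dist}) work. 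Your phrase ``the full strength of (\ref{rel_taumu}) is needed to close this bootstrap'' points at the right place but supplies no mechanism; as written the step would fail.

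Two consequences of this gap propagate. First, since the bounds on $E^N$, $\rho^N$, $B(X^N)$ and $\nabla U(X^N)$ all grow with $N$, the passage to the limit is not ``routine'' compactness: the paper must prove the superexponential Cauchy estimate $\sigma^N(t)\leq C^{-N^c}$ through the iteration (\ref{double_sum}), precisely because the Gr\"onwall constants blow up polynomially in $N$ and only the $e^{-\lambda N^2/2}$ source term saves the day. Second, your confinement criterion $\tau>\mu/2+1$, obtained by comparing $\Phi(X_1)^2\sim X_1^{2(1-\tau)}$ with $|U|=X_1^{-\mu}$ in the energy identity, is a heuristic that is not quantitatively closed: the energy identity itself contains $\int_0^t E\cdot V\,ds$, so it cannot be used to bound the field-work integrals it depends on. The paper's actual route is $|{\mathcal{H}}(X_1^N(t))|\leq C{\mathcal{V}}^N(t)$ from the $V_2$ equation, hence $X_1^{-(\tau-1)}\leq C{\mathcal{V}}^N$ and $|\nabla U(X^N)|\leq C({\mathcal{V}}^N)^{\frac{\mu+1}{\tau-1}}$, which requires $\frac{\mu+1}{\tau-1}+1<2$ and, for the field estimate, the stronger condition (\ref{rel_taumu}).
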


\bigskip

\noindent The proof will be given in the next Section. It could be adapted to different potentials $U$ and geometries of the  boundary  (which here is $x_1=0$).

\section{Proof of the Theorem}
\label{sec_proof}

\subsection{The partial dynamics}

\medskip

We introduce the partial dynamics, that is a sequence of differential systems in which the initial densities have compact velocity support: 
\begin{equation}
\label{chN}
\begin{cases}
\dsp  \dot{X}^{N}(t)= V^{N}(t)\\
\dsp  \dot{V}^{N}(t)= E^{N}(X^{N}(t),t)+V^{N}(t)\wedge B(X^{N}(t)) -\nabla U(X^N(t)) \\
\dsp ( X^{N}(0), V^{N}(0))=(x,v) \\
f_0^{N}(x,v) = f_0(x,v) \, \chi (|v| < N) \\
 \end{cases}
\end{equation}
where $\chi(\cdot)$ is the characteristic function of the set $(\cdot)$,

$$
E^{N}(x,t)=\int_{\R^3} \frac{x-y}{|x-y|^3} \ \rho^{N}(y,t) \, dy, 
$$
$$
\rho^{N}(x,t)=\int_{\R^3} f^{N}(x,v,t) \, dv
$$
and
\begin{equation}
f^{N}(X^{N}(t),  V^{N}(t),  t) = f_0^{N}(x,v).
\label{liouv}
\end{equation}
The total energy of the system is
\begin{equation}
\begin{split}
 {\mathcal{E}} = \,\frac12 \int dx\, &\left[\int dv \, |v|^2 f^N( x, v,t)\,+
 \rho^N( x,t)\int dy \, 
\frac{ \rho^N ( y,t)}{  | x- y|} \right] \\
& \, + \int dx \,U(x) \, \rho^N ( x,t) 
\end{split} 
\end{equation}
which is constant in time and finite, by virtue of the initial conditions. 
From this  we obtain, 
\begin{equation}
{\mathcal{K}}^N(t) := \frac12 \int dx\, \int dv \, |v|^2 f^N( x, v,t) \leq {\mathcal{E}} - \int dx  \,U(x) \,  \rho^N(x,t)
\label{kinet0}
\end{equation}
which will provide a bound for the kinetic energy useful in the sequel.

From now on all constants appearing in the estimates will be positive and 
possibly depending on the initial data and $T$, but not on $N$.
They will be denoted by $C$,  will be possibly different from line to line,
 and some of them will be numbered in order to be quoted elsewhere in the paper. 

We introduce the maximal velocity of a characteristic

 \begin{equation}
\label{maxV}
{\mathcal{V}}^N(t)=\max\left\{C_3,\,\sup_{s\in[0,t]}\sup_{(x,v)\in \Lambda \times b(N)}|V^N(s)|\right\}
\end{equation}
where  $b(N) = \{ v \in \R^3: |v| < N \}$,    $C_3$ is a constant that will be chosen large enough, and the maximal displacement  is defined by
\begin{equation}
R^N(t) = 1 + \int_0^t {\mathcal{V}}^N(s) \, ds.
\label{g}
\end{equation}

We premise the following fundamental result on the partial dynamics, which will be proved in Section \ref{sec_estE}.
\begin{proposition}
There exists $\gamma < 2/3$ such that
\begin{equation}
\int_0^t|E^N(X^N(s),s)|ds\leq C_4 \,{\mathcal{V}}^N(t)^\gamma.
\label{field_eq}
\end{equation}
\label{field}
\end{proposition}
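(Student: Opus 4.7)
My plan is to combine a classical Coulomb-field splitting with a Schaeffer-type time-averaging argument along the characteristic $X^{N}(\cdot)$, exploiting the Maxwell--Boltzmann tail inherited from $f_0^N$ through Liouville's theorem.

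First, I would bound the spatial density pointwise. Since at time $s\leq t$ the velocity support of $f^N(y,\cdot,s)$ is contained in $b(\mathcal{V}^N(t))$, and since Liouville's theorem preserves both $\|f_0\|_\infty$ and the Gaussian tail (\ref{exp}) along the characteristic flow, splitting $\int f^N(y,v,s)\,dv$ at an intermediate velocity scale $M\leq \mathcal{V}^N(t)$ and optimizing gives
$$\|\rho^N(s)\|_\infty \leq C\,\mathcal{V}^N(t)^{a}$$
with some exponent $a<3$; the fact that $a$ can be taken strictly below the crude value $3$ is a direct consequence of the availability of all velocity moments of $f_0^N$ provided by (\ref{exp}).

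Second, I would split the Coulomb convolution around $X^N(s)$ into a near-field ball of radius $r$ and a far field. The far field is bounded by $C r^{-2}$ using the conservation of the total charge, and the near field naively by $C r\,\|\rho^N(s)\|_\infty$. The resulting pointwise estimate is not sharp enough, so I would replace it by a time-averaged one. Writing
$$\int_0^t\!\int f_0^N(x_0,v_0)\,\chi\bigl(|X^N(s)-Y^N(s)|<r\bigr)\,\frac{ds\,dx_0\,dv_0}{|X^N(s)-Y^N(s)|^2},$$
where $Y^N(s)$ denotes the characteristic starting from $(x_0,v_0)$ at time $0$, I would control the inner time integral by the sojourn time of the relative characteristic inside the ball of radius $r$. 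For pairs with sufficiently large relative velocity this sojourn time is of order $r/|V^N(s)-V^N_{x_0,v_0}(s)|$, while the complementary set of slow pairs has small measure controlled once again by the Gaussian tail of $f_0^N$. Summing the contributions and choosing the splitting radius $r$ as a suitable power of $\mathcal{V}^N(t)$ yields (\ref{field_eq}) with an exponent $\gamma<2/3$.

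The main obstacle will be this last averaging step, where the relative acceleration $\dot V^N-\dot V^N_{x_0,v_0}$ picks up contributions from $\nabla U(X^N)-\nabla U(Y^N)$ and from the Lorentz term $V^N\wedge B(X^N)-V^N_{x_0,v_0}\wedge B(Y^N)$, both singular as $x_1\to 0$. These have to be tamed using the confinement $X_1^N(s)\geq A/2$ to be established separately, together with the smoothness of $U$ and $B$ away from $x_1=0$, so that on $\Lambda$ one effectively reduces to the classical magnetic-field-free Vlasov--Poisson setting in which Schaeffer's sojourn-time argument is standard.
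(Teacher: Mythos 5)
Your overall strategy --- a Schaeffer-type sojourn-time estimate for the time average of the Coulomb field along a characteristic --- is indeed the strategy of the paper (following \cite{KRM16}), but the step you yourself identify as the main obstacle is resolved in a way that cannot work. You propose to tame the singular Lorentz and external forces via a uniform confinement $X_1^N(s)\geq A/2$, ``to be established separately,'' so as to reduce to the field-free setting on $\Lambda$. No such bound holds: the external field attracts the particles towards the wall, and the only confinement available (and the only one the paper proves, see (\ref{min_dist})) is $X_1^N(s)\geq C\,[{\mathcal{V}}^N(t)]^{-1/(\tau-1)}$, which degenerates as $N\to\infty$. Consequently $|B(X^N(s))|$ can reach $[{\mathcal{V}}^N]^{\tau/(\tau-1)}$ and $|\nabla U(X^N(s))|$ can reach $[{\mathcal{V}}^N]^{(\mu+1)/(\tau-1)}$, so the relative acceleration of a pair of characteristics is far too large for the full relative velocity $V^N-V^N_{x_0,v_0}$ to be approximately constant during a close encounter; your sojourn-time bound $r/|V^N-V^N_{x_0,v_0}|$ then fails precisely where it is needed. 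The paper's way out is structural rather than perturbative: since $B$ is directed along $e_3$ and $\nabla U$ along $e_1$, the third velocity component satisfies $\dot V_3=E_3$ only, so the free-flight argument (Lemma \ref{lemrect}) is run exclusively on $V_3-W_3$, combined with a dyadic decomposition in $|w_\perp|$ and the kinetic-energy bound (\ref{kinet}) to handle the transverse directions. Any correct proof must exploit this (or an equivalent) cancellation.

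Two further gaps. First, the measure of the ``slow pairs'' is not controlled by the Gaussian tail of $f_0^N$ in $v_0$: the map $v_0\mapsto V^N_{x_0,v_0}(s)$ is not explicit, and the paper instead uses Liouville's theorem together with the local kinetic energy (the estimate $\rho_1\leq C{\mathcal{V}}^{-\eta/2}K^{1/2}$ for the slab $|V_3(s)-w_3|\leq 2{\mathcal{V}}^{-\eta}$). Relatedly, transporting the Gaussian bound (\ref{exp}) from $v$ to $V^N(t)$ --- hence your pointwise density bound with exponent $a<3$ --- already requires the field estimate you are trying to prove; in the paper, (\ref{rho_t}) is a \emph{consequence} of Proposition \ref{field}, obtained inside a bootstrap. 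Second, even with the correct component, approximate conservation of the relative velocity only holds on short intervals of length $\Delta_1$ as in (\ref{d1}), determined by the crude pointwise bound (\ref{C1}); averaging over all of $[0,t]$ in one stroke, as you propose, does not close. This is why the paper proves the averaged bound first on $\Delta_1$ and then iterates over the increasing scales $\Delta_\ell$ (Proposition \ref{media} and the induction establishing (\ref{avern})).
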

As a consequence, the following holds:                                  
\begin{corollary}
\begin{equation}
{\mathcal{V}}^N(T) \leq C N
\label{V^N}
\end{equation}
\begin{equation}
|B(X^N(t))|\leq CN^{\frac{\tau}{\tau -1}}
\label{B}
\end{equation}
\begin{equation}
|\nabla U(X^N(t))| \leq C N^{\frac{\mu +1}{\tau -1}}
\label{nablaW}
\end{equation}
\begin{equation}
\rho^N(x, t) \leq C N^{3\gamma'} 
\label{rho_t}
\end{equation}
\begin{equation}
{\mathcal{K}}^N(t) \leq C N^{\frac{\mu}{\tau -1}}
\label{kinet}
\end{equation}
where  $\gamma'=\max \{ \gamma,  \frac{\mu+1}{\tau-1} \}$, 
with $\gamma$ the exponent in (\ref{field_eq}),  $\tau$ appearing in the definition of $B$  and
$\mu$ in the definition of $U$, and  satisfying  (\ref{rel_taumu}).
\label{coro}
\end{corollary}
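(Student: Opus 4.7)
\noindent\textit{Plan of proof.} I would organize the argument around a shield lower bound on $X_1^N(t)$, from which all the other estimates follow. Since $B=(0,0,h(X_1))$ and $\nabla U$ is parallel to $e_1$, the second component of the characteristic equation reads $\dot V_2^N=E_2^N-V_1^N h(X_1^N)$, and $V_1^N h(X_1^N)=\tfrac{d}{ds}H(X_1^N(s))$ for a primitive $H$ of $h$ with $H(y)\sim -y^{-(\tau-1)}/(\tau-1)$ as $y\to 0^+$. Integrating,
\[
V_2^N(t)+H(X_1^N(t))=v_2+H(x_1)+\int_0^t E_2^N(X^N(s),s)\,ds,
\]
and bounding the right hand side via Proposition~\ref{field}, $|V_2^N(t)|\le{\mathcal V}^N(t)$, $|v_2|\le N$, and $|H(x_1)|\le C$ on $\Lambda$, one gets the shield lower bound
\[
X_1^N(t)\ge c(N+{\mathcal V}^N(t))^{-1/(\tau-1)}.
\]
Plugging this into $h(x_1)=x_1^{-\tau}$ and $|\nabla U(x)|=\mu x_1^{-\mu-1}$ yields (\ref{B}) and (\ref{nablaW}) once (\ref{V^N}) is known.

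To obtain (\ref{V^N}), I would use $V\wedge B\perp V$ to write the along-characteristic energy identity
\[
|V^N(t)|^2=|v|^2+2U(x)-2U(X^N(t))+2\int_0^t V^N\cdot E^N\,ds,
\]
and insert $|v|<N$, $|U(x)|\le C$, the shield bound $|U(X^N(t))|\le C(N+{\mathcal V}^N(t))^{\mu/(\tau-1)}$, and Proposition~\ref{field}. After taking the supremum over characteristics one is led to
\[
{\mathcal V}^N(t)^2\le N^2+C(N+{\mathcal V}^N(t))^{\mu/(\tau-1)}+C\,{\mathcal V}^N(t)^{1+\gamma}+C,
\]
in which both correction exponents are strictly less than $2$ (by $\gamma<2/3$ and $\mu/(\tau-1)<1$, the latter following from (\ref{rel_taumu})). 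A standard bootstrap then gives ${\mathcal V}^N(T)\le CN$. The kinetic bound (\ref{kinet}) is immediate from total-energy conservation: since the self-energy is nonnegative, ${\mathcal K}^N(t)\le{\mathcal E}+\int|U|\rho^N\,dx$, and on $\supp\rho^N(\cdot,t)$ the shield forces $|U|\le CN^{\mu/(\tau-1)}$, while $\int\rho^N\,dx$ equals the conserved total charge.

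The delicate point is the pointwise bound (\ref{rho_t}): the vector $V^N(t)-V^N(0)$ cannot be controlled directly, because the magnetic-rotation contribution $\int V^N\wedge B\,ds$ is huge. I would circumvent this by passing to the rotated complex coordinate
\[
\widetilde V(t)=e^{i\theta(t)}(V_1^N(t)+iV_2^N(t)),\qquad \theta(t)=\int_0^t h(X_1^N(s))\,ds,
\]
in which the Lorentz term cancels and the evolution reduces to $\dot{\widetilde V}=e^{i\theta}(E_1^N-\partial_{x_1}U+iE_2^N)$. Combined with $\dot V_3^N=E_3^N$, this gives
\[
|\widetilde V(t)-\widetilde V(0)|+|V_3^N(t)-V_3^N(0)|\le C\int_0^t(|E^N|+|\nabla U|)\,ds\le CN^{\gamma'},
\]
so that $|V^N(t)-R\,V^N(0)|\le CN^{\gamma'}$ for some trajectory-dependent rotation $R$ about the $x_3$-axis and, a fortiori, $\bigl||V^N(t)|-|V^N(0)|\bigr|\le CN^{\gamma'}$. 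Propagating the Gaussian bound $f_0^N\le C_1 e^{-\lambda v_0^2}$ backward along characteristics and splitting $\int f^N(x,v,t)\,dv$ at the threshold $|v|=2CN^{\gamma'}$ then bounds $\rho^N(x,t)$ by $C\|f_0\|_\infty N^{3\gamma'}$ plus a convergent Gaussian tail, proving (\ref{rho_t}).
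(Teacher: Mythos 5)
Your proposal is correct, and for (\ref{V^N}), (\ref{B}), (\ref{nablaW}) and (\ref{kinet}) it coincides with the paper's argument: the shield lower bound on $X_1^N$ comes from integrating the second component of the characteristic equation and recognizing $V_1^N h(X_1^N)$ as a total derivative of a primitive $\mathcal{H}$ of $h$, and the velocity bound comes from the along-characteristic energy identity plus Proposition~\ref{field}, with both correction exponents strictly below $2$. (The paper bounds $\int_0^t V^N\cdot\nabla U\,ds$ by $\mathcal{V}^N\cdot T\cdot\sup|\nabla U|\le C(\mathcal{V}^N)^{1+\frac{\mu+1}{\tau-1}}$ rather than integrating exactly to $U(x)-U(X^N(t))$ as you do; both exponents are admissible under (\ref{rel_taumu}).) Where you genuinely diverge is (\ref{rho_t}). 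The paper stays with the scalar identity $\frac{d}{dt}|V^N|^2=2V^N\cdot(E^N-\nabla U)$, deduces $|v|^2\ge |V^N(t)|^2-C_5\widetilde V^N(t)N^{\gamma'}$ with $\widetilde V^N(t)=\sup_{s\le t}|V^N(s)|$, and then splits according to whether $\widetilde V^N(t)\le 2C_5N^{\gamma'}$, using in the complementary case the evaluation of the inequality at the time where $|V^N|$ attains its maximum to get $|v|^2\ge |\bar v|^2/2$. You instead pass to the rotated variable $e^{i\theta(t)}(V_1^N+iV_2^N)$ with $\theta=\int_0^t h(X_1^N)$, which cancels the Lorentz term exactly and yields the stronger, two-sided bound $\bigl||V^N(t)|-|v|\bigr|\le CN^{\gamma'}$ directly; the same threshold splitting of $\int f^N\,dv$ then closes the estimate. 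Your route costs one extra computation (the conjugation) but buys a cleaner conclusion — control of the whole velocity vector up to a rotation about $e_3$ — and avoids the somewhat delicate time-of-maximum step; the paper's route reuses machinery already needed for (\ref{V^N}) and requires no new identity. Both are valid, and both ultimately rest on the same inputs, namely $\int_0^t|E^N|\,ds\le C(\mathcal{V}^N)^{\gamma}$ and $\sup|\nabla U(X^N)|\le CN^{\frac{\mu+1}{\tau-1}}$.
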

\begin{proof}
To prove (\ref{V^N}) we observe that the external Lorentz force does not affect the modulus of the particle velocities, since
\begin{equation}
\frac{d}{dt}\frac{|V^N(t)|^2}{2}=V^N(t) \cdot \left[ E^N(X^N(t), t)
-\nabla U(X^N(t)) \right] \label{magn}
\end{equation}

hence
\begin{equation}
\begin{split}
|V^N(t)|^2=&|v|^2+2\int_0^tV^N(s) \cdot  E^N(X^N(s), s)\,ds  \\
& -2 \int_0^t V^N(s) \cdot \nabla U(X^N(t)) \, ds.
\end{split}
\label{v^2}
\end{equation}
Writing  the second component of system (\ref{chN}) we have
\begin{equation}
\begin{split}
\dot V_2^N(t) =& -  V_1^N(t) \,  h\left(X_1^N(t)\right) + E_2^N(X^N(t),t)  \\
=& -\frac{d}{dt} {\mathcal{H}}\left(X_1^N(t)\right)
+ E_2^N(X^N(t),t)
\end{split}
\label{compo2}
\end{equation}
where  ${\mathcal{H}}$ is a primitive of $h$.   Integrating in time we obtain
\begin{equation}
\begin{split}
V_2^N(t)-V_2^N(0) =& \, -{\mathcal{H}}\left(X_1^N(t)\right)
+ {\mathcal{H}}\left(X_1^N(0)\right)  \\
&+ \int_0^t  E_2^N(X^N(s),s) \, ds 
\end{split}
\label{shield}
\end{equation}
and, by Proposition \ref{field}, considering that ${\mathcal{H}}\left(X_1^N(0)\right)$ is finite by the assumption 
given in Theorem \ref{th_01} (the distance of the plasma from the border $x_1=0$ is initially greater
than $A>0$) we get
\begin{equation}
|{\mathcal{H}}\left(X_1^N(t)\right)| \leq C {\mathcal{V}}^N(t)
\end{equation}
from which, using the form of the magnetic field which holds near the border (\ref{campo_B2}),
\begin{equation}
\frac{1}{\left(X_1^N(t)\right)^{\tau-1}} \leq C {\mathcal{V}}^N(t).
\label{min_dist}
\end{equation}
From (\ref{min_dist}) and (\ref{campo_W}) we obtain
\begin{equation}
|\nabla U(X^N(t))| \leq \frac{C}{\left(X_1^N(t)\right)^{\mu+1}} \leq C \left[{\mathcal{V}}^N(t)\right]^{\frac{\mu+1}{\tau-1}}
\label{nablaW'}
\end{equation}
and this last estimate,  Proposition \ref{field}, and the choice of the initial data $v\in b(N)$,  imply for (\ref{v^2})
\begin{equation}
|V^N(t)|^2 \leq N^2+C \left[{\mathcal{V}}^N(t)\right]^{\gamma+1} +C \left[{\mathcal{V}}^N(t)\right]^{\frac{\mu+1}{\tau-1}+1}.
\end{equation} 
Hence, since $\gamma+1<2$ and $\frac{\mu+1}{\tau-1}+1<2$, by taking the $\sup_{t\in [0,T]}$ we obtain the thesis.

The proof of (\ref{B}) follows easily from (\ref{V^N}), (\ref{min_dist}) and the definition of the magnetic field;
 the proof  (\ref{nablaW}) comes from (\ref{V^N}) and  (\ref{nablaW'}).

 Now we prove (\ref{rho_t}). We make the (invertible) change of variables $(x,v)\to (\bar{x}, \bar{v})$
 $$
 (\bar{x}, \bar{v})=\left(X^N(x,v,t),V^N(x,v,t)\right),$$
then, by using the invariance of the density along the characteristics, we have 
\begin{equation*}
\rho^N(\bar{x}, t) =\int f(\bar{x}, \bar{v},t)\ d\bar{v}=\int f_0(x,v)\ d\bar{v}
\end{equation*}
where in the last integral $(x,v)$ are functions of $(\bar{x}, \bar{v})$  accordingly to the former transformation.
We notice that, putting
$$
\widetilde{V}^N(t)=\sup_{0\leq s\leq t}|V^N(s)|,
$$
 from (\ref{v^2}), Proposition  \ref{field},  (\ref{V^N})  and (\ref{nablaW})  it follows
\begin{equation}
\begin{split}
|v|^2\geq |V^N(t)|^2-C_5\widetilde{V}^N(t) N^{\gamma'}
\end{split}\label{VN}
\end{equation}
remembering the definition of $\gamma'=\max \{ \gamma,  \frac{\mu+1}{\tau-1} \}$.
Hence, we decompose the integral as follows
\begin{equation}
\begin{split}
&\rho^N(\bar{x}, t)\leq\\&\int_{\widetilde{V}^N(t)\leq 2C_5 N^{\gamma'}} f_0(x,v)\ d\bar{v}+C \int_{\widetilde{V}^N(t)> 2C_5 N^{\gamma'}} e^{-\lambda |v|^2}\ d\bar{v}\\& \leq   CN^{3\gamma'}+  C \int_{\widetilde{V}^N(t)> 2C_5 N^{\gamma'}} e^{-\lambda |v|^2}\ d\bar{v},   \label{f_0}
\end{split}
\end{equation}
since  by definition $|\bar{v}|\leq \widetilde{V}^N(t).$

\noindent By (\ref{VN}), if $\widetilde{V}^N(t)> 2C_5 N^{\gamma'},$ then $$|v|^2\geq |V^N(t)|^2-\frac{[ \widetilde{V}^N(t)]^2}{2}.$$
Since the inequality holds for any $t\in [0,T],$ it holds also at the time in which $V^N$ reaches its maximal value over $[0,t],$ that is 
\begin{equation}
|v|^2 \geq  [\widetilde{V}^N(t)]^2-\frac{[ \widetilde{V}^N(t)]^2}{2}= \frac{[\widetilde{V}^N(t)]^2}{2}\geq \frac{| {V}^N(t)|^2}{2}.
\label{shield2}
\end{equation}
Hence from (\ref{f_0}) it follows
\begin{equation}
\begin{split}
&\rho^N(\bar{x}, t)\leq CN^{3\gamma'}+C \int e^{-\frac{\lambda}{2}|\bar{v}|^2}\ d\bar{v} \leq CN^{3\gamma'}.\end{split}
\end{equation}
We come now to the proof of (\ref{kinet}).  By (\ref{kinet0}) and (\ref{campo_W}), considering the minimum distance from the border $x_1=0$
as expressed by (\ref{min_dist}), we get
$$
{\mathcal{K}}^N(t) \leq C \left( \mathcal{V}^N(t)\right)^{\frac{\mu}{\tau-1}}  \int dx \, \rho^N(x,t) \leq 
C \left( \mathcal{V}^N(t)\right)^{\frac{\mu}{\tau-1}}
$$
since the integral of $\rho^N$ is the total charge, which is initially finite and is conserved.
Hence by (\ref{V^N}) we obtain (\ref{kinet}).

\end{proof}

\bigskip

We give now a first estimate on the electric field $E,$ in terms of the maximal velocity. We will make use of it in the proof of Proposition \ref{field} in Section \ref{sec_estE}.
\begin{proposition}
\begin{equation}
|E^N(x,t)|\leq C_6\left[{\mathcal{V}}^N(t)\right]^{\frac43 +  \frac{\mu}{3(\tau -1)}}.
\label{C1}
\end{equation}
\label{prop2}\end{proposition}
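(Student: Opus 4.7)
The plan is to combine the classical Lions--Perthame--type pointwise interpolation for the Vlasov--Poisson electric field with the \emph{a priori} bounds on $\rho^N$ and on $\mathcal{K}^N$ collected in Corollary \ref{coro}. The target exponent rewrites as $\tfrac{4}{3}+\tfrac{\mu}{3(\tau-1)}=\tfrac{4}{9}\cdot 3+\tfrac{1}{3}\cdot\tfrac{\mu}{\tau-1}$, which strongly suggests controlling $|E^N(x,t)|$ by a product of the form $\|\rho^N(t)\|_\infty^{4/9}\,\|\rho^N(t)\|_{L^{5/3}}^{5/9}$ and then bounding each norm by a power of $\mathcal{V}^N(t)$.

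First I would derive the interpolation. Starting from $|E^N(x,t)|\leq\int_{\R^3}\rho^N(y,t)|x-y|^{-2}dy$, I split the integration region into $\{|y-x|<R\}$ and $\{|y-x|\geq R\}$: the near piece is bounded by $C\,\|\rho^N(t)\|_\infty R$ using $\int_{|y-x|<R}|x-y|^{-2}dy=4\pi R$, and the far piece by $C\,\|\rho^N(t)\|_{L^{5/3}}R^{-4/5}$ via H\"older with exponents $5/3$ and $5/2$ together with the explicit radial computation $\||\cdot|^{-2}\|_{L^{5/2}(\{|y|\geq R\})}\leq CR^{-4/5}$. Optimising over $R>0$ yields
$$
|E^N(x,t)|\leq C\,\|\rho^N(t)\|_\infty^{4/9}\,\|\rho^N(t)\|_{L^{5/3}}^{5/9}.
$$

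Next I bound each of the two norms by a power of $\mathcal{V}^N(t)$. For the $L^\infty$ norm, since by definition of $\mathcal{V}^N(t)$ the velocity support of $f^N(\cdot,\cdot,t)$ is contained in $\{|v|\leq\mathcal{V}^N(t)\}$ and $\|f^N(t)\|_\infty\leq C$ by (\ref{2.9}), integrating in $v$ over that ball gives $\|\rho^N(t)\|_\infty\leq C\,\mathcal{V}^N(t)^3$. For the $L^{5/3}$ norm I use the standard velocity-cutoff interpolation: for every $V(y)>0$,
$$
\rho^N(y,t)\leq CV(y)^3+V(y)^{-2}\int_{\R^3}|v|^2 f^N(y,v,t)\,dv,
$$
and pointwise optimisation of $V(y)$ yields $\rho^N(y,t)^{5/3}\leq C\int|v|^2 f^N(y,v,t)\,dv$; integrating in $y$ then gives $\|\rho^N(t)\|_{L^{5/3}}^{5/3}\leq C\,\mathcal{K}^N(t)$. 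Finally, (\ref{kinet}) provides $\mathcal{K}^N(t)\leq C\,\mathcal{V}^N(t)^{\mu/(\tau-1)}$.

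Putting the three ingredients together,
$$
|E^N(x,t)|\leq C\,\bigl(\mathcal{V}^N(t)^3\bigr)^{4/9}\,\bigl(\mathcal{V}^N(t)^{\mu/(\tau-1)}\bigr)^{1/3}=C\,\mathcal{V}^N(t)^{\,4/3+\mu/(3(\tau-1))},
$$
which is the claim. The only substantive ingredient is the use of (\ref{kinet}) from Corollary \ref{coro}: this is precisely where the attractive external potential $U$ (and the resulting indefiniteness of the total energy) is turned into a usable estimate, by means of the minimum-distance control (\ref{min_dist}). All other steps are textbook interpolation or radial integration, so I do not foresee any serious obstacle beyond careful bookkeeping.
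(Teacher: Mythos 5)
Your proposal is correct and follows essentially the same route as the paper: the same velocity-cutoff interpolation giving $\int\rho^{N}(x,t)^{5/3}dx\leq C\,\mathcal{K}^N(t)\leq C\,[\mathcal{V}^N(t)]^{\mu/(\tau-1)}$, the same near/far splitting of the Coulomb kernel with $\|\rho^N(t)\|_\infty\leq C[\mathcal{V}^N(t)]^3$ on the near part and H\"older with exponents $5/3$, $5/2$ on the far part, and the same optimization over the splitting radius. Packaging the optimization as the interpolation inequality $|E^N|\leq C\|\rho^N\|_\infty^{4/9}\|\rho^N\|_{L^{5/3}}^{5/9}$ before substituting the bounds is only a cosmetic reordering of the paper's argument.
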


\begin{proof} 
We  premise an estimate on the spatial density: 
\begin{equation}
\int dx \, \rho^N(x,t)^{\frac53}\leq C \, {\mathcal{K}}^N(t) \leq C \left[{\mathcal{V}}^N(t)\right]^{\frac{\mu}{\tau -1}}.
\label{lem1}
\end{equation}

In fact:
\begin{equation*}
\begin{split}
\rho^N(x,t)&\leq \int_{|v|\leq a} dv f^N(x,v,t)+\frac{1}{a^2}\int_{|v|>a}dv\ |v|^2 f^N(x,v,t)\leq \\&C a^3+\frac{1}{a^2}\int dv\ |v|^2 f^N(x,v,t).
\end{split}
\end{equation*}
By minimizing over $a,$  taking the power $\frac53$ of both members and integrating in space,
recalling (\ref{kinet}), we get (\ref{lem1}).

Then we have
\begin{equation}
\begin{split}
|E^N(x,t)|&\leq \int  \rho^N(y,t) \, \frac{1}{|x-y|^2} \, dy \\
&\leq \int_{|x-y|\leq \varepsilon} \rho^N(y,t) \, \frac{1}{|x-y|^2} \, dy +
\int_{|x-y| > \varepsilon} \rho^N(y,t) \, \frac{1}{|x-y|^2} \, dy \\
&\leq C \,\varepsilon\, \|\rho^N(t)\|_{L^{\infty}}  + \left(\int \rho^N(y,t)^\frac53 dy \right)^\frac35
\left( \int_{|x-y|>\varepsilon} \frac{1}{|x-y|^5}\, dy  \right)^\frac25  \\
&\leq C\, \varepsilon \left[{\mathcal{V}}^N(t) \right]^3 + C \left[{\mathcal{V}}^N(t)\right]^{\frac{3\mu}{5(\tau -1)}}
 \varepsilon^{-\frac45}.
\end{split}
\end{equation}
Minimizing with respect to $\varepsilon$,
$$
|E^N(x,t)| \leq C \left[{\mathcal{V}}^N(t) \right]^{\frac43+  \frac{\mu}{3(\tau -1)}}
$$
and we observe that the term $\frac{\mu}{3(\tau -1)}$ is sufficiently small  (smaller than $\frac15$),
which  will be used later in (\ref{para}).

\end{proof}

\subsection{Convergence of the partial dynamics}
\label{conv_pd}

We repeat here for completeness the proof of the convergence of the partial dynamics as $N\to\infty$ 
 in close analogy to what was done in \cite{JSP17}.
 We denote by $\Lambda= \{ x \in {\mathbb{R}}^3:  x_1>A>0 \}$  the spatial support of $f_0$ and  by $b(N)$   the ball in $\mathbb{R}^3$ of center $0$ and radius $N.$ We fix a couple $(x,v)\in  \Lambda\times b(N)$,
 and we consider $X^N(t)$ and $X^{N+1}(t),$ that is the time evolved characteristics, both starting from this initial condition, in the  different dynamics relative to the initial distributions $f_0^N$ and $f_0^{N+1}.$
 
  We set
\begin{equation}
\delta^N(t) =\sup_{(x,v)\in \Lambda\times b(N)} |X^N(t)-X^{N+1}(t)|
\label{delta^N}
\end{equation}
\begin{equation}
\eta^N(t)=\sup_{(x,v)\in \Lambda\times b(N)} |V^N(t)-V^{N+1}(t)|
\end{equation}
and 
\begin{equation}
\sigma^N(t)=\delta^N(t)+\eta^N(t).
\end{equation}
Our goal is to prove the following estimate
\begin{equation}
\sup_{t\in[0,T]}\sigma^N(t)\leq C^{-N^c} \label{fff} 
\end{equation}
with $c$ a positive exponent and $C>1$.  Once we get this estimate, the proof of the Theorem is accomplished. Indeed, (\ref{fff}) implies that the sequences $X^N(t)$ and $V^N(t)$ are Cauchy sequences, uniformly on  $[0,T].$ 
 Therefore, for any fixed $(x,v)$, they are bounded uniformly in $N$.
  
Indeed, let us fix an initial datum $(x,v)$ and choose a positive integer $N_0$ such that $N_0=\hbox{Intg}(|v|+C),$ where $\hbox{Intg}(a)$ represents the integer part of the number $a$ and $C$ is assumed sufficiently large. Then, for any $N>N_0,$ by (\ref{fff}) and (\ref{V^N}) in Corollary \ref{coro}, we have 
\begin{equation}
\begin{split}
|V^N(t)-v|\leq\,& |V^{N_0}(t)-v|+\sum_{k=N_0}^{N-1}\eta^k(t)\leq\\& |V^{N_0}(t)-v|+C\leq |v| +CN_0 \end{split}
\end{equation}
which implies, by the choice of $N_0,$
\begin{equation}
|V^N(t)|\leq C(|v|+1)\label{N_0}
\end{equation}
and hence
\begin{equation}
|X^N(t)|\leq |x|+ C\left(|v|+1\right).\label{N_1}
\end{equation}
Since for any fixed $(x,v)$ the sequences $\{X^N(t)\},$ $\{V^N(t)\}$, and consequently $\{f^N(t)\}$, are  Cauchy sequences, they converge, as $N\to \infty,$ to some limit functions, which we call $X(t)$,  $V(t)$,  $f(x,v,t)$.

Property (\ref{N_0})  allows to show that the functions $f^N(t)$ enjoy property (\ref{exp 1}), with  
constants  $\bar C$ and  ${\lambda_1}$ independent of $N$. 
Indeed by (\ref{N_0}),  we observe that
\begin{equation}\begin{split}
f^N(X^N(t), V^N(t),t)=&f^N_0(x,v)\leq  C_1 e^{-\lambda |v|^2} \\& \leq  C_2 e^{-{\lambda_1}|V^N(t)|^2}.\label{G2}
\end{split}\end{equation}
Moreover, 
$$
\int f^N(x,v,t) \, dxdv < C
$$
since the total charge is initially  finite and is conserved.
This implies that also the limit function $f(x,v,t)$ is integrable in $(x,v)$ and  satisfies  property (\ref{exp 1}).

  The uniqueness of the solution and the fact that it  satisfies
 system (\ref{ch})  follow directly from what already done in Section 3 of \cite{CCM18}, where the plasma evolves in the whole  ${\mathbb{R}}^3$ in case of infinite charge and gaussian velocities. 
 
 The fact that for any $(x,v)$ in the support of $f_0$ it results $X_1(t)>0$ follows from (\ref{shield}) combined with
 (\ref{shield2}). In fact, writing (\ref{shield}) for the limit solution, 
  using (\ref{shield2}) to bound $V_2(t)$ on the left hand side of (\ref{shield}), and using  the  electric field as expressed in  (\ref{Eq}) (which turns out to be bounded, using the properties of $f(x,v,t)$), we immediatly obtain $X_1(t)>0$ for any $t\in [0,T]$.  
   
\bigskip

\noindent $\mathbf{Proof\, of \,estimate\, (\ref{fff})}$.

\noindent  We have
\begin{equation}
\begin{split} 
&\delta^N(x,v,t)=|X^N(t)-X^{N+1}(t)| =\\&  \, \bigg| \int_0^t dt_1 \int_0^{t_1} dt_2 \, \Big[E^N\left(X^N(t_2), t_2\right)
+V^N(t)\wedge B\left(X^{N}(t_2)\right)   \\
&- \nabla U\left(X^N(t_2) \right)                  
  -  E^{N+1}\left(X^{N+1}(t_2), t_2\right)  \\
&-V^{N+1}(t)\wedge B\left(X^{N+1}(t_2)\right) +\nabla U\left(X^{N+1}(t_2) \right) \Big] \bigg| \leq   \\
&\int_0^t dt_1 \int_0^{t_1} dt_2 \, \left[ \mathcal{G}_1(x,v,t_2) + \mathcal{G}_2(x,v,t_2) + \mathcal{G}_3(x,v,t_2)
+\mathcal{G}_4(x,v,t_2)  \right]
\end{split}
\label{iter_}
\end{equation}
where
\begin{equation}
\mathcal{G}_1(x,v,t) = \left|E^N\left(X^N(t), t\right)
-E^N\left(X^{N+1}(t), t\right)\right|
\end{equation}
\begin{equation}
\mathcal{G}_2(x,v,t) = \left|E^N\left(X^{N+1}(t), t\right)
-E^{N+1}\left(X^{N+1}(t), t\right)\right|
\end{equation}
\begin{equation}
\mathcal{G}_3(x,v,t) = \left|V^N(t)\wedge B\left(X^N(t)\right)
-V^{N+1}(t)\wedge B\left(X^{N+1}(t)\right)\right|
\end{equation}
\begin{equation}
\mathcal{G}_4(x,v,t) = \left|- \nabla U\left(X^N(t) \right) 
+\nabla U\left(X^{N+1}(t) \right) \right|. 
\end{equation}

To estimate the first term $\mathcal{G}_1$
we have to prove a quasi-Lipschitz property for $E^N.$ We consider the difference $|E^N(x,t)-E^N(y,t)|$ at two generic points $x$ and $y,$ (which will be in our particular case $X^N(t)$ and $X^{N+1}(t)$) and set: 
\medskip

\begin{equation*}
\mathcal{G}_1'(x,y,t)= |E^N(x,t)-E^N(y,t)|\chi (|x-y|\geq 1)
\end{equation*}
and
\begin{equation*}
\mathcal{G}_1''(x,y,t)= |E^N(x,t)-E^N(y,t)|\chi (|x-y|< 1).
\end{equation*}
 Hence 
\begin{equation}
\mathcal{G}_1(x,v,t)=\mathcal{G}_1'(X^N(t),X^{N+1}(t),t)+\mathcal{G}_1''(X^N(t),X^{N+1}(t),t). \label{F}\end{equation}
\medskip
By Proposition \ref{prop2} and Corollary \ref{coro} we have
\begin{equation}
\begin{split}
\mathcal{G}_1'(x,y,t)\leq&\ |E^N(x,t)|+|E^N(y,t)|\leq C N^{\frac43+\frac{\mu}{3(\tau-1)}}\leq
 CN^{3\gamma} \leq CN^{3\gamma}|x-y| \label{b1}
\end{split}
\end{equation}
which can be realized suitably adapting $\gamma<\frac23$ (introduced in (\ref{field_eq})).

Let us now estimate the term $\mathcal{G}_1''$. 

We put $\zeta=\frac{x+y}{2}$ and consider the sets: 

\begin{equation}
\begin{split}
&S_1=\{z:|\zeta-z|\leq 2|x-y|\}, \\& S_2=\Big\{z:2|x-y|\leq |\zeta-z|\leq\frac{4}{|x-y|}\Big\}\\& S_3=\Big\{z:|\zeta-z|\geq \frac{4}{|x-y|}\Big\}.\label{S'}
\end{split}
\end{equation}
We have:
\begin{equation}
\begin{split}
&\mathcal{G}_1''(x,y,t)\leq  \int_{S_1\cup S_2\cup S_3} \left|\frac{x-z}{|x-z|^3}-\frac{y-z}{|y-z|^3}\right| \rho^N(z,t) \,dz.
\end{split} \label{s}
\end{equation}

By (\ref{rho_t}) and the definition of $\zeta$ we have
\begin{equation}
\begin{split}
&\int_{S_1} \left| \frac{x-z}{|x-z|^3}-\frac{y-z}{|y-z|^3}\right| \rho^N(z,t) \,dz\leq\\& CN^{3\gamma'}   \int_{S_1} \left|\frac{1}{|x-z|^2}+\frac{1}{|y-z|^2}  \right|\,dz\leq C N^{3\gamma'}  |x-y|.
\end{split}
\label{Ai1}\end{equation}
Let us pass to the integral over the set $S_2.$ By the Lagrange theorem
applied to each component   $i=1, 2, 3$,  of $E^N(x,t)-E^N(y,t)$, there exists $\xi_z$  such that $\xi_z=\kappa x +(1-\kappa)y$ and $\kappa\in [0,1]$ (depending on $z$), for which
\begin{equation} 
\begin{split}
\int_{S_2} \left|\frac{x_i-z_i}{|x-z|^3}-\frac{y_i-z_i}{|y-z|^3}\right|& \rho^N(z,t) dz\leq C |x-y|\int_{S_2} \frac{\rho^N(z,t)}{|\xi_z-z|^3}\,dz\\&\leq  CN^{3\gamma'} |x-y|\int_{S_2} \frac{1}{|\xi_z-z|^3}\,dz.
\end{split}\label{D'}
\end{equation}
Since in $S_2$ it results $|\xi_z-z|\geq \frac{|\zeta-z|}{2}$, we have
$$
\int_{S_2} \frac{1}{|\xi_z-z|^3}\,dz \leq 8\int_{S_2} \frac{1}{|\zeta-z|^3}\,dz \leq C (|\log |x-y||+1)
$$

\noindent and combining the results for the three components
\begin{equation}
\begin{split}
&\int_{S_2} \left|\frac{x-z}{|x-z|^3}-\frac{y-z}{|y-z|^3}\right| \rho^N(z,t) \,dz\leq  CN^{3\gamma'} 
|x-y|\, (|\log |x-y||+1). \label{D}
\end{split}
\end{equation}

For the last integral over $S_3,$ again by the Lagrange theorem, we have for some $\xi_z=\kappa x +(1-\kappa)y$ and $\kappa\in [0,1],$  
\begin{equation}
 \int_{ S_3} \left|\frac{x_i-z_i}{|x-z|^3}-\frac{y_i-z_i}{|y-z|^3}\right| \rho^N(z,t)\, dz\leq C |x-y| \int_{ S_3} \frac{\rho^N(z,t)}{|\xi_z-z|^3}\,dz.\label{S}
 \end{equation}
Notice that, since $z\in S_3$ and $|x-y|<1$ by definition of $\mathcal{G}_1''$, it is 
$$
|\xi_z - z|\geq  |\zeta - z| - |x-y|,
$$
and 
$$
\frac{1}{|\xi_z - z|} \leq \frac{1}{|\zeta - z| - |x-y|} = \frac{1}{\frac34|\zeta - z| +\frac14 |\zeta-z|- |x-y|}
\leq \frac43 \frac{1}{|\zeta-z|},
$$
 then we have, by (\ref{lem1}) and H\"older's inequality,
\begin{equation}
\begin{split}
&\int_{ S_3} \frac{\rho^N(z,t)}{|\xi_z-z|^3}\,dz \leq C\int_{|\zeta-z|\geq 4 } \frac{\rho^N(z,t)}{|\zeta-z|^3}\,dz \leq \\
& C N^{\frac35 \frac{\mu}{\tau-1}} \left[\int_{|\zeta-z|\geq 4 } \left(\frac{1}{|\zeta-z|^3}\right)^\frac52\,dz\right]^\frac25
\leq C N^{\frac35 \frac{\mu}{\tau-1}}.
\label{hold}
\end{split}
\end{equation}
Using this estimate in (\ref{S}) and going back to (\ref{s}), by (\ref{Ai1}) and (\ref{D}) we have proved that
\begin{equation}
\mathcal{G}_1''(x,y,t)\leq CN^{3\gamma'}  |x-y|  \, (|\log |x-y||+1). \label{f1}
\end{equation}

\bigskip
 In conclusion, recalling (\ref{F}), estimates (\ref{b1}),  (\ref{f1}), and the definition of $\delta^N(t)$ (\ref{delta^N}) show that
\begin{equation}
\mathcal{G}_1(x,v,t)\leq  CN^{3\gamma'}  \left(1+|\log \delta^N(t)|\right)\delta^N(t). \label{f1'}
\end{equation}

\medskip

We now concentrate on the term $\mathcal{G}_2$. 
Putting $\bar{X}=X^{N+1}(t),$  we have:
\begin{equation}
\mathcal{G}_2(x,v,t)\leq \mathcal{G}_2'(\bar X,t)+\mathcal{G}_2''(\bar X,t),
\label{I_2}
\end{equation}
where
\begin{equation}
\mathcal{G}_2'(\bar X,t)=\left|\int_{|\bar X - y|\leq 2\delta^N(t)} \frac{  \rho^N(y,t)-\rho^{N+1}(y,t)}{|\bar X-y|^2} \, dy \right|
\end{equation}
\begin{equation}
\mathcal{G}_2''(\bar X,t)=\left|\int_{2\delta^N(t)\leq |\bar X - y| } \frac{  \rho^N(y,t)-\rho^{N+1}(y,t)}{|\bar X-y|^2} \, dy \right|.
\end{equation}
By estimate (\ref{rho_t}) we get
\begin{equation}
\mathcal{G}_2'(\bar X,t)\leq \int_{|\bar X - y|\leq 2\delta^N(t)} \frac{  \rho^N(y,t)+\rho^{N+1}(y,t)}{|\bar X-y|^2} \, dy \leq CN^{3\gamma'}  \delta^N(t).\label{F2'}
\end{equation}
Now we estimate the term $\mathcal{G}_2''.$ By the invariance of $f^N(t)$ along the characteristics, making a change of variables, we decompose the integral as follows:
\begin{equation}
\begin{split}
\mathcal{G}_2''(\bar X,t)=  \left| \int_{S^N(t)} d y \, dw\,\frac{f_0^N( y, w)}{|\bar X-  Y^N(t)|^2}\ -\int_{S^{N+1}(t)}dy \, dw\,\frac{f_0^{N+1}(y,w)}{|\bar X-Y^{N+1}(t)|^{2}} \right|
  \end{split}
\label{I_2''}
\end{equation}
where we put, for $i=N, N+1,$ $$(Y^i(t), W^i(t))= (X^i(y,w,t), V^i(y,w,t))$$ 
$$
S^{i}(t)=\{( y, w):2\delta^N(t)\leq|{\bar X}- Y^i(t)| \}.
$$
We notice that it is
\begin{equation} 
\mathcal{G}_2''(\bar X,t)\leq \mathcal{I}_1+\mathcal{I}_2+\mathcal{I}_3\label{l_i}
\end{equation}
where
\begin{equation}
\mathcal{I}_1= \int_{S^N(t)} d y  
 \int d w \left| \frac{1}{|\bar X -  Y^N(t)|^{2}}
-\frac{1}{|\bar X -  Y^{N+1}(t)|^{2}} \right| f_0^N( y,w),
\end{equation}
\begin{equation}
\mathcal{I}_2= \int_{S^{N+1}(t)} d y   
  \int d w \, \frac{ \left| f_0^N( y,w) - f_0^{N+1}( y, w) \right|}{|\bar X -  Y^{N+1}(t)|^{2}}  \, 
  \end{equation}
 \begin{equation}
  \mathcal{I}_3=
  \int_{S^N(t)\setminus S^{N+1}(t)}dy\int dw\,\frac{f_0^N(y,w)}{\left|\bar{X}-Y^{N+1}(t)\right|^2}.
  \end{equation}
We start by estimating $\mathcal{I}_1.$ 
  By the Lagrange theorem 
 \begin{equation}
\mathcal{I}_1\leq C \sup_{(y,w)\in \Lambda \times b(N)} |Y^{N}(t)- Y^{N+1}(t)| \int_{S^N(t)} dy\int dw \, \frac{ f_0^N(y,w)}{|\bar X - \xi^N(t) |^{3}} 
\end{equation}
where $\xi^N(t)= \kappa Y^{N}(t)+ (1-\kappa) Y^{N+1}(t)$   for a suitable $\kappa \in [0, 1]$.
By putting 
\begin{equation*}
\left(\bar{y}, \bar{w}\right)= \left((Y^N(t), W^N(t)\right)    
\end{equation*}
and
$$
{\bar{S}}^N(t)=\left\{ (\bar y, \bar w): (y, w)\in  S^N(t) \right\},
$$
 we get
\begin{equation}
\begin{split}
\mathcal{I}_1\leq & \, C  \delta^N(t)\int_{ {\bar{S}}^N(t)}d\bar{y}\int d\bar{w} \, \frac{ f^N(\bar{y},\bar{w}, t)}{|\bar X - \xi^N(t) |^{3}}.\end{split}\label{bar}
\end{equation}

If $(y,w)\in S^N(t)$ then\begin{equation*}
 |\bar X -  \xi^N(t)| > |\bar X - \bar{y}| - |\bar{y} -Y^{N+1}(t)|\geq|\bar X - \bar{y}|-\delta^N(t)\geq \frac{|\bar X - \bar{y}|}{2}\label{>>}
\end{equation*}
which, by (\ref{bar}), implies
\begin{equation}
\begin{split}
\mathcal{I}_1 \leq \ &C \delta^N(t)\int_{ \bar S^N(t)}d\bar{y}\int d\bar{w} \, \frac{ f^N(\bar{y},\bar{w}, t)}{|\bar X - \bar{y} |^{3}}\,=C \delta^N(t)\int_{\bar S^N(t)}d\bar{y}\, \frac{ \rho^N(\bar{y}, t)}{|\bar X - \bar{y} |^{3}}
\end{split}.
\end{equation}
Now we consider the sets 
$$
A_1= \left\{(\bar y,\bar w):2 \delta^N(t)\leq\left|\bar{X}-\bar{y}\right|\leq  \frac{4}{\delta^N(t)}\right\}
$$
$$
A_2= \left\{(\bar y,\bar w): 1\leq\left|\bar{X}-\bar{y}\right|\right\}. 
$$
Then it is $\bar S^N(t)\subset \bigcup_{i=1,2} A_i$ 
and
\begin{equation}
\mathcal{I}_1 \leq C \delta^N(t)\sum_{i=1}^2\int_{  A_i}d\bar{y}\frac{ \rho^N(\bar{y}, t)}{|\bar X - \bar{y} |^{3}}.
\end{equation}
We estimate the integral over $A_1$ as we did in (\ref{D'}), the one over $A_2$ is bounded as in (\ref{hold}). Hence we get
\begin{equation}
\begin{split}
\mathcal{I}_1  \leq CN^{3\gamma'}   \delta^N(t)\left[|\log \delta^N(t)|+1\right]\end{split}.    
\label{i1'}
\end{equation}

We estimate now the term $\mathcal{I}_2.$ By the definition of $f_0^i,$ for $i=N, N+1$,  it follows 
\begin{equation}
\begin{split}
\mathcal{I}_2 =
 \int_{S^{N+1}(t)} d y   
  \int_{} d w\,  \frac{  f_0^{N+1}( y, w) }{|\bar X -  Y^{N+1}(t)|^{2}} \chi\left(N\leq |w|\leq N+1\right).
\end{split} \label{AA} \end{equation}
We evaluate the integral over $S^{N+1}(t)$ by considering the sets   ($R^N(t)$ is defined in (\ref{g}))   
$$
B_1=   \left\{(y,w): |\bar{X}-Y^{N+1}(t)|\leq4R^{N+1}(t)\right\} $$
$$
 B_2=  \left\{(y,w):4R^{N+1}(t)\leq  |\bar{X}-Y^{N+1}(t)|\right\},
$$
so that 
\begin{equation}
S^{N+1}(t)\subset \bigcup_{i=1,2}B_i. \label{si}
\end{equation}
By putting 
\begin{equation*}
(\bar{y}, \bar{w})=\left((Y^{N+1}(t), W^{N+1}(t)\right),
\end{equation*}
 by (\ref{V^N}) it is $|\bar{w}|\leq CN,$ so that we have (recalling also (\ref{exp}))
\begin{equation}
\begin{split}
&\int_{B_1} d y   
  \int_{} d w\,  \frac{  f_0^{N+1}( y, w) }{|\bar X -  Y^{N+1}(t)|^{2}} \chi\left(N\leq |w|\leq N+1\right) \leq  \\
&C e^{-\lambda N^2}\int_{B_1} dy \int_{} dw\,\, \frac{ 1 }{|\bar X - Y^{N+1}(t)|^{2}} \chi\left(|w|\leq N+1\right)  \leq \\
&C e^{-\lambda N^2} \int_{|\bar{w}|\leq C N} d\bar{w}\int_{ |\bar{X}-\bar{y}|\leq 4R^{N+1}(t)} d\bar{y}\, \frac{1 }{|\bar X - \bar{y}|^{2}}\leq \\
&C e^{-\lambda N^2}N^3R^{N+1}(t)\leq C e^{-\lambda N^2}N^4 \leq C e^{-\frac{\lambda}{2} N^2}.
\label{g5}
\end{split}\end{equation}

For the integral over the set $B_2,$ we observe that, if $|\bar{X}-Y^{N+1}(t)|\geq 4R^{N+1}(t),$ then $|\bar{X}-y|\geq 3R^{N+1}(t)$ and  
\begin{equation*}
|\bar{X}-Y^{N+1}(t)|\geq |\bar{X}-y|-R^{N+1}(t)\geq \frac23|\bar{X}-y|.
\end{equation*} Hence  for a suitable $0<q<1$ chosen later,  using also  (\ref{exp}), 
\begin{equation}
\begin{split}
&\int_{B_2} d y   
  \int_{} d w\,  \frac{  f_0^{N+1}( y, w) }{|\bar X -  Y^{N+1}(t)|^{2}} \chi\left(N\leq |w|\leq N+1\right) = \\
&\int_{B_2} d y   
  \int_{} d w\,  \frac{  \left[f_0^{N+1}( y, w)\right]^q  \left[f_0^{N+1}( y, w)\right]^{1-q} }{|\bar X -  Y^{N+1}(t)|^{2}} \chi\left(N\leq |w|\leq N+1\right) \leq  \\  
 &C e^{-\lambda q N^2 }\int_{B_2} d y   
  \int_{} d w\,  \frac{    \left[f_0^{N+1}( y, w)\right]^{1-q} }{|\bar X -  Y^{N+1}(t)|^{2}} \chi\left(|w|\leq N+1\right) \leq  \\   
  &C e^{-\lambda q N^2 }\int_{|\bar{X}-y|\geq 3R^{N+1}(t)} d y   
  \int_{} d w\,  \frac{  \left[f_0^{N+1}( y, w)\right]^{1-q} }{|\bar X -  y|^{2}} \chi\left(|w|\leq N+1\right) \leq  \\
 &C e^{-\lambda q N^2 }\left(\int dydw \,  f_0^{N+1}( y, w)\right)^{1-q}
 \left( \int_{|\bar{X}-y|\geq 1} dydw \left[\frac{\chi\left(|w|\leq N+1\right)}{|\bar X -  y|^{2}}\right]^\frac1q  \right)^q    \\
 &\leq \, C e^{-\lambda q N^2 } N^{3q} \leq C e^{-\frac{\lambda}{2}  N^2 }
   \label{g6}
\end{split}\end{equation}
with $\frac12<q<\frac23$  in order to have a bound analogous to (\ref{g5}) and to assure the
 convergence of the spatial integral of $|\bar X -  y|^{-2/q}$, 
using also  the fact that the  total charge is finite.

\noindent Going back to (\ref{AA}), by (\ref{si}), (\ref{g5}) and (\ref{g6}) we have
\begin{equation}
\mathcal{I}_2\leq   Ce^{-\frac{\lambda}{2} N^2}.
\label{I}
\end{equation}
Finally, we estimate the term $\mathcal{I}_3.$ If $(y,w)\in S^N(t),$ then
$$
\left|\bar{X}-Y^{N+1}(t)\right|\geq \left|\bar{X}-Y^{N}(t)\right| -\delta^N(t)\geq \delta^N(t),
$$
so that
 \begin{equation}
\begin{split}
 & \mathcal{I}_3 =
  \int_{S^N(t)\setminus S^{N+1}(t)}dy\int dw\,\frac{f_0^N(y,w)}{\left|\bar{X}-Y^{N+1}(t)\right|^2}\leq\\&
\frac{1}{[\delta^N(t)]^2}  \int_{\left(S^{N+1}(t)\right)^c}dy\int dw\,f_0^N(y,w).
 \end{split} \end{equation}
 If $(y,w)\in \left(S^{N+1}(t)\right)^c,$ then $$\left|\bar{X}-Y^{N}(t)\right|\leq \left|\bar{X}-Y^{N+1}(t)\right|+ \delta^N(t)\leq 3\delta^N(t).$$ 
 Hence, by putting 
 \begin{equation*}
 (\bar{y}, \bar{w}) = \left(Y^N(t), W^N(t)\right)
 \end{equation*}
  we get
 \begin{equation}
\begin{split}
  \mathcal{I}_3\leq &
\,\frac{1}{[\delta^N(t)]^2}  \int_{\left|\bar{X}-\bar{y}\right|\leq 3\delta^N(t)} d\bar{y}\int d\bar{w}f^N(\bar{y}, \bar{w}, t)\leq\\&\frac{1}{[\delta^N(t)]^2}  \int_{ \left|\bar{X}-\bar{y}\right|\leq 3\delta^N(t)}d\bar{y}\,\rho(\bar{y},t)\leq \\& C\frac{N^{3\gamma'}}{[\delta^N(t)]^2}  \int_{ \left|\bar{X}-\bar{y}\right|\leq 3\delta^N(t)}d\bar{y}\,\leq CN^{3\gamma'}\,
\delta^N(t).\end{split}\label{I3}
  \end{equation}
Going back to (\ref{l_i}), by (\ref{i1'}), (\ref{I}) and (\ref{I3}) we have
\begin{equation}
\mathcal{G}_2''\leq C\left[N^{3\gamma'}  \delta^N(t)\left(|\log \delta^N(t)|+1\right)+e^{-\frac{\lambda}{2} N^2}\right], \label{F2''}
\end{equation}
so that, by (\ref{I_2}), (\ref{F2'}) and (\ref{F2''}), we get
\begin{equation}
\mathcal{G}_2\leq C\left[N^{3\gamma'} \delta^N(t)\left(|\log \delta^N(t)|+1\right)+e^{-\frac{\lambda}{2} N^2}\right].\label{f}
\end{equation}
Hence, going back to the estimate (\ref{f1'}) of the term $\mathcal{G}_1$ we have that
\begin{equation}
\mathcal{G}_1+\mathcal{G}_2\leq C\left[N^{3\gamma'} \delta^N(t)\left(|\log \delta^N(t)|+1\right)+e^{-\frac{\lambda}{2} N^2}\right].\label{g'}
\end{equation}

We need now a property, which  is easily seen to hold by convexity, that is, for any $s\in (0,1)$ and $a\in (0,1)$ it holds:
$$
 s(|\log s|+1)\leq s|\log a|+a.
 $$
Hence, for $\delta^N(t)< 1,$ we have, for any $a<1,$
$$
\mathcal{G}_1+\mathcal{G}_2\leq C\left[N^{3\gamma'}  \left( \delta^N(t)|\log a|+a\right)+e^{-\frac{\lambda}{2} N^2}\right].
$$
Therefore, in case $\delta^N(t)< 1,$ we choose $a =e^{-\lambda N^2}$,  yielding
\begin{equation}
\mathcal{G}_1+\mathcal{G}_2\leq C\left[N^{3\gamma'+2} \delta^N(t)+e^{-\frac{\lambda}{2} N^2}\right].\label{fdue}
\end{equation}
In case that $\delta^N(t)\geq1$ we come back to (\ref{g'})  (valid for any $\delta^N(t)$)
and we insert
the bound $\delta^N(t)\leq C N$, obtaining
$$
\mathcal{G}_1+\mathcal{G}_2\leq C\left[N^{3\gamma'} \log N \,  \delta^N(t)+e^{-\frac{\lambda}{2} N^2}\right],
$$
which can be included in (\ref{fdue}).

We come now to the  estimate of the term $\mathcal{G}_3.$ We have
\begin{equation}
\begin{split}
\mathcal{G}_3(x,v,t) \leq &\ |V^N(t)|| B(X^{N}(t))-B(X^{N+1}(t)|+\\&
\ |V^N(t)-V^{N+1}(t)|| B(X^{N+1}(t))|. 
\end{split}\end{equation}
By applying the Lagrange theorem to the magnetic field (as defined in (\ref{campo_B})) we have
\begin{equation*}
| B(X^{N}(t))-B(X^{N+1}(t)|\leq C\frac{|X^{N}(t)-X^{N+1}(t)|}{\left(\xi_1^N(t)\right)^{\tau+1}}
\end{equation*}
where $\xi_1^N(t)$ is the first coordinate  of a point   $\xi^N(t)$  of the segment joining $X^N(t)$ and $X^{N+1}(t).$ Due to the bound (\ref{B}), it has to be $\xi_1^N(t) \geq \frac{C}{N^{1/(\tau-1)}}.$
 Hence
 \begin{equation}
 | B(X^{N}(t))-B(X^{N+1}(t)|\leq CN^{\frac{\tau+1}{\tau-1}}\delta^N(t).
\end{equation}
This, together with the bounds (\ref{V^N}) and (\ref{B}), imply
\begin{equation}
\begin{split}
\mathcal{G}_3(x,v,t)& \leq \,C\left[N^{\frac{2\tau}{\tau-1 }} \delta^N(t)+N^{\frac{\tau}{\tau-1}}\eta^N(t)\right].\\&
 \label{F3}
\end{split}
\end{equation}
Finally, we come to the term $\mathcal{G}_4(x,v,t)$.  We have, by the form of the field $U$ in (\ref{campo_W}),
\begin{equation}
\left| \nabla U\left(X^N(t) \right) 
-\nabla U\left(X^{N+1}(t) \right) \right| \leq  C \frac{|X^{N}(t)-X^{N+1}(t)|}{\left( \zeta_1^N(t)\right)^{\mu +2}}
\end{equation}
where $\zeta_1^N(t)$ is the first coordinate  of a point   $\zeta^N(t)$  of the segment joining $X^N(t)$ and $X^{N+1}(t).$
Due to the bound (\ref{nablaW}), it has to be $\zeta_1^N(t) \geq \frac{C}{N^{1/(\tau-1)}}.$
 Hence
 \begin{equation}
\mathcal{G}_4(x,v,t)=
\left| \nabla U\left(X^N(t) \right) 
-\nabla U\left(X^{N+1}(t) \right) \right| \leq C N^{\frac{\mu+2}{\tau-1}}  \delta^N(t).
\label{G4}
 \end{equation}

At this point, going back to (\ref{iter_}), taking the supremum over the set $\{(x,v)\in \Lambda\times b(N)\},$ by (\ref{fdue}),  (\ref{F3}), and  (\ref{G4}) we arrive at:
\begin{equation}
\begin{split}
\delta^N(t) \leq \,&C\left(N^{3\gamma'+2} +N^{\frac{2\tau}{\tau -1}} + N^{\frac{\mu+2}{\tau-1}} \right) \int_0^t dt_1 \int_0^{t_1} dt_2\,\delta^N(t_2)+\\ &C \int_0^t dt_1 \int_0^{t_1} dt_2\,e ^{-\frac{\lambda}{2} N^2}+
CN^{\frac{\tau}{\tau-1}}\int_0^t dt_1 \int_0^{t_1} dt_2 \ \eta^N(t_2),\label{dd}
\end{split}
\end{equation}
where in (\ref{f1'}) we have taken into account  the bound (\ref{V^N}), which gives $|\delta^N(t)|\leq CN.$
Analogously, by using the same method to estimate the quantity $\eta^N(t),$ we get
\begin{equation}
\begin{split}
\eta^N(t) \leq \,&C\left(N^{3\gamma'+2} +N^{\frac{2\tau}{\tau -1}} + N^{\frac{\mu+2}{\tau-1}}\right) \int_0^t dt_1 \,\delta^N(t_1)+\\ &C \int_0^t dt_1 \,e ^{-\frac{\lambda}{2} N^2}+
CN^{\frac{\tau}{\tau-1}}\int_0^t dt_1  \, \eta^N(t_1).
\end{split}
\end{equation}
Since obviously $\delta^N(t_1)\leq \int_0^{t_1}dt_2\,\eta^N(t_2)$ we get from the last eqn.
\begin{equation}
\begin{split}
&\eta^N(t) \leq C\left(N^{3\gamma'+2} +N^{\frac{2\tau}{\tau -1}} + N^{\frac{\mu+2}{\tau-1}}\right) \int_0^t dt_1 \int_0^{t_1} dt_2\, \eta^N(t_2) \\
&+C \int_0^t dt_1\,e ^{-\frac{\lambda}{2} N^2}+CN^{\frac{\tau}{\tau-1}}\int_0^t dt_1  \ \eta^N(t_1).\label{ee}
\end{split}
\end{equation}

Hence, summing up (\ref{dd}) and (\ref{ee}), we have:

\begin{equation}
\begin{split}
\sigma^N(t)\leq \,&C\left(N^{3\gamma'+2} +N^{\frac{2\tau}{\tau -1}} + N^{\frac{\mu+2}{\tau-1}}\right) \int_0^t dt_1 \int_0^{t_1} dt_2\,\sigma^N(t_2)\\&+ CN^{\frac{\tau}{\tau-1}}\int_0^t dt_1  \ \sigma^N(t_1)+C \left(t+\frac{t^2}{2}\right)e ^{-\frac{\lambda}{2} N^2}.
\end{split}\label{d+e}
\end{equation}
We take $\nu$ such that
\begin{equation}
\max \left\{  3\gamma'+2, \frac{2\tau}{\tau-1}, \frac{\mu+2}{\tau-1} \right\} < 2\nu < 4  
\label{nu}  
\end{equation}
observing that such relation is compatible with the bounds for the parameters $\gamma'$, $\tau$, $\mu$, expressed
in Corollary \ref{coro}, $\gamma'<2/3$ and  $\tau > \frac74 \mu +\frac{11}{4}$.
Hence 
we get  
\begin{equation}
\begin{split}
\sigma^N(t)\leq \,C&\Bigg[N^{2\nu} \int_0^t dt_1 \int_0^{t_1} dt_2\,\sigma^N(t_2)+N^{\nu}\int_0^t dt_1  \ \sigma^N(t_1)+\\& \left(t+\frac{t^2}{2}\right)e ^{-\frac{\lambda}{2} N^2}\Bigg].
\end{split}\label{d'}
\end{equation}

We insert in the integrals the same inequality  for $\sigma^N(t_1)$ and $\sigma^N(t_2)$ and iterate in time, up to $k$ iterations. By direct inspection, using in the last step the estimate $\sup_{t\in [0,T]}\sigma^N(t)\leq CN,$ we arrive at
\begin{equation}
\begin{split}
\sigma^N(t)\leq &\,CNe ^{-\frac{\lambda}{2} N^2}\sum_{i=1}^{k-1}C^i\sum_{j=0}^{i}\binom{i}{j}\frac{N^{2\nu j}t^{2j}N^{\nu (i-j)}t^{i-j}}{(2j+i-j)!} +\\&C^kN \sum_{j=0}^{k}\binom{k}{j}\frac{N^{2\nu j}t^{2j}N^{\nu (k-j)}t^{k-j}}{(2j+k-j)!}.\end{split}
\label{double_sum}
\end{equation}
 The double sum in (\ref{double_sum}) collects all combinations of $j$ double time integrations of
 $(t+t^2/2)e^{-\frac{\lambda}{2}N^2}$ in (\ref{d'}), yielding the factor  $N^{2\nu j} t^{2 j}$ and the contribution $2j$ in the factorial at
 denominator, and of ($i-j$) single time integrations of $(t+t^2/2)e^{-\frac{\lambda}{2}N^2}$, yielding the factor $N^{\nu (i-j)} t^{i-j}$
 and the contribution ($i-j$) in the factorial at denominator; 
 the last single sum in (\ref{double_sum}) has the same structure, coming from terms of the iteration which avoid integration of $(t+t^2/2)e^{-\frac{\lambda}{2}N^2}$.  Note that  in absence of the middle term (single time integration) in
 the right hand side of (\ref{d'}) we would obtain
 $$
 \sigma^N(t) \leq  N C^k N^{2\nu k} \frac{t^{2k}}{(2k)!} + N e^{-\frac{\lambda}{2}N^2}
 \sum_{i=1}^{k-1} C^i N^{2\nu i} \frac{t^{2 i}}{(2 i)!}
 $$
 as in the case of  \cite{CCM18} without external fields.

By putting
\begin{equation*}
S_k''=\sum_{i=1}^{k-1}C^i\sum_{j=0}^{i}\binom{i}{j}\frac{N^{\nu (i+j)}t^{i+j}}{(i+j)!}
\end{equation*}
and 
\begin{equation*}
S_k'=C^k\sum_{j=0}^{k}\binom{k}{j}\frac{N^{\nu (j+k)}t^{j+k}}{(j+k)!}
\end{equation*}
we get
\begin{equation}
\sigma^N(t)\leq \,CNe ^{-\frac{\lambda}{2} N^2}S''_k+CNS_k'.\label{summ}
\end{equation}
 
 We start by estimating $S''_k.$ Recalling that $\binom{i}{j}<2^i$ we get
\begin{equation}
\begin{split}
S''_k\leq &\,\sum_{i=1}^{k-1}2^i C^i\sum_{j=0}^{i}\frac{N^{\nu (i+j)}t^{i+j}}{(i+j)!}.  
\end{split}
\end{equation}
The use of the Stirling formula $a^nn^n\leq n!\leq b^nn^n$ for some constants $a,b>0$ yields:
\begin{equation}
S''_k\leq \,\sum_{i=1}^{k-1}2^i\sum_{j=0}^{i}\frac{N^{\nu(i+j)}(Ct)^{i+j}}{(i+j)^{i+j}}\leq \sum_{i=1}^{k-1}2^i\frac{N^{\nu i}(Ct)^{i}}{i^{i}}
\sum_{j=0}^{i}\frac{N^{\nu j}(Ct)^{j}}{j^{j}},
\end{equation}
from which it follows, again by the Stirling formula,
\begin{equation}
S''_k\leq \left(e^{N^{\nu}Ct}\right)^2\leq e^{N^{\nu}C}.\label{sum'}
\end{equation}

For the term $S_k',$ putting $j+k=\ell,$ we get
\begin{equation}
S_k'\leq 2^k C^k\sum_{\ell=k}^{2k}\frac{N^{\ell\nu}(Ct)^{\ell}}{\ell^\ell}\leq  C^kk\frac{N^{2k\nu}}{k^k}.
\end{equation}
By choosing $k=N^\zeta$  with $\zeta>  2\nu$,  we have, for sufficiently large $N$,
\begin{equation}
 S_k'\leq C^kk\left(k^{\frac{2\nu}{\zeta}-1}\right)^k\leq C^{-N^\zeta} \label{ffinal}
 \end{equation}
with $C>1$.

\noindent Going back to (\ref{summ}), by (\ref{sum'}) and (\ref{ffinal}) we have seen that
\begin{equation}
\sigma^N(t)\leq CN\left[e^{-\frac{\lambda}{2} N^2}e^{N^{\nu}C}+C^{-N^
\zeta}\right],
\end{equation}
and since
$\nu <2$
we have proved estimate (\ref{fff}).
\bigskip

\bigskip

\section{The estimate of $E^N$:  proof of Proposition \ref{field} } 
\label{sec_estE}

The proof of Proposition \ref{field} follows the same lines of \cite{KRM16}, leading to  a  refined estimate of $E^N$ in terms of the maximal velocity. To this aim we 
  need to control the time average of $E^N$ over a suitable time interval. Setting
\begin{equation*}
\langle E^N \rangle_{\bar{\Delta}}
:= \frac{1}{\bar{\Delta}} \int_{t}^{t+\bar{\Delta}} |E^N(X(s), s)| \, ds
\end{equation*}
we have the following result:
\begin{proposition}
There exists a positive number $\bar{\Delta},$ depending on $N,$ such that:
\begin{equation}
\langle E^N \rangle_{\bar{\Delta}} \leq
C {\cal{V}}^N(T)^{\gamma}  \qquad \qquad  \gamma < \frac23.
\label{averna}
\end{equation}
for any $t\in [0,T]$ such that $t\leq T- \bar\Delta$.
\label{media}\end{proposition}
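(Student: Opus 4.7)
The plan is to use the characteristic change of variables to convert the time–space integral into a phase-space integral, and then exploit the fact that two characteristics with appreciable relative velocity cannot remain close for long. Starting from
\begin{equation*}
\int_t^{t+\bar\Delta}|E^N(X(s),s)|\, ds \leq \int_t^{t+\bar\Delta} ds \int dy\, \frac{\rho^N(y,s)}{|X(s)-y|^2},
\end{equation*}
I would write $\rho^N(y,s)=\int f^N(y,w,s)\,dw$ and pass to the initial phase-space variables by the change $(y,w)=(Y^N(s;y_0,w_0),W^N(s;y_0,w_0))$; Liouville's theorem gives
\begin{equation*}
\int_t^{t+\bar\Delta}|E^N(X(s),s)|\, ds \leq \int dy_0\, dw_0\, f_0^N(y_0,w_0) \int_t^{t+\bar\Delta}\frac{ds}{|X(s)-Y^N(s;y_0,w_0)|^2}.
\end{equation*}

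Next, I would control the inner time integral by setting $Z(s)=X(s)-Y^N(s;y_0,w_0)$, so that $\dot Z=V-W$ and
\begin{equation*}
\ddot Z = [E^N(X,s)-E^N(Y,s)] + [V\wedge B(X)-W\wedge B(Y)] - [\nabla U(X)-\nabla U(Y)].
\end{equation*}
Using Proposition \ref{prop2}, (\ref{min_dist}), (\ref{B}) and (\ref{nablaW}), each summand is bounded by a polynomial in $\mathcal{V}^N(T)$. I would choose $\bar\Delta=\bar\Delta(N)$ of the form $\mathcal{V}^N(T)^{-\alpha}$ with $\alpha$ large enough so that $|\ddot Z|\,\bar\Delta\leq \tfrac{1}{2}|\dot Z|$ over $[t,t+\bar\Delta]$; then $Z(s)$ is approximately a straight line, and for a line in $\mathbb{R}^3$ one has the standard estimate $\int_t^{t+\bar\Delta}|Z(s)|^{-2}\,ds\leq C/(|V(s^*)-W(s^*)|\cdot d_{\min})$, where $s^*$ is a reference time in $[t,t+\bar\Delta]$ and $d_{\min}$ is the closest approach of $Z$.

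With this passage-time estimate, I would split the phase-space integral into three regions determined by cutoffs $a$ (distance) and $b$ (relative velocity at $s^*$): (i) distant pairs $|X(s^*)-Y^N(s^*)|\geq a$, bounded by Hölder using the $L^{5/3}$ control of $\rho^N$ from (\ref{lem1}); (ii) near pairs with relative velocity $\geq b$, for which the straight-line estimate gives a contribution of order $(ab)^{-1}$ times the phase-space measure controlled by $\|f_0^N\|_\infty$ and the Gaussian tails (\ref{exp}); (iii) near pairs with $|V(s^*)-W(s^*)|\leq b$, whose phase-space measure is of order $a^3b^3 \|f_0^N\|_\infty$. Balancing $a,b$ and using the improved $L^{5/3}$ bound rather than the $L^\infty$ bound on $\rho^N$ should yield the exponent $\gamma<2/3$ (in the absence of external fields the sharp exponent is $\gamma=14/27<2/3$, and here one picks up only a small correction from $\mu/(\tau-1)$ that is absorbed by hypothesis (\ref{rel_taumu})).

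The main obstacle I expect is the acceleration estimate in the second step: the singular fields $B$ and $\nabla U$ contribute large factors to $\ddot Z$, forcing $\bar\Delta$ to shrink polynomially in $\mathcal{V}^N(T)$ and degrading the final exponent. One must track carefully how the minimum-distance bound (\ref{min_dist}) keeps both trajectories at distance $\gtrsim\mathcal{V}^N(T)^{-1/(\tau-1)}$ from the wall and thereby renders $|B|$ and $|\nabla U|$ only polynomially large; this is exactly where the assumption (\ref{rel_taumu}) is used, to guarantee that the perturbation introduced by the external fields is small enough to preserve the strict inequality $\gamma<2/3$ that appears without them.
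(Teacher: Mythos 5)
Your overall scheme --- the Liouville change of variables, a passage-time estimate for close encounters, and a far/near-fast/near-slow splitting of phase space --- is indeed the backbone of the paper's proof, but two of your key steps fail as stated. First, you propose to show that the full relative position $Z(s)=X(s)-Y(s)$ is approximately rectilinear by bounding $\ddot Z$, which contains the term $V\wedge B(X)-W\wedge B(Y)$. Near the wall this term is of order $|V-W|\,|B|+|W|\,|\nabla B|\,|X-Y|$, with $|B|\sim N^{\tau/(\tau-1)}$ and $|\nabla B|\sim N^{(\tau+1)/(\tau-1)}$, so it is comparable to, or much larger than, $|\dot Z|$ on any time scale longer than $N^{-\tau/(\tau-1)}$; the condition $|\ddot Z|\,\bar\Delta\le\tfrac12|\dot Z|$ then forces $\bar\Delta$ far below what the averaging step can tolerate (see the next point). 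The paper avoids this issue entirely by working only with the third component $\Gamma=X_3-Y_3$: since $B=(0,0,h(x_1))$ and $U=U(x_1)$, neither the Lorentz force nor $\nabla U$ has a third component, so $\ddot\Gamma$ involves only the electric field (Lemma \ref{lemrect}), and accordingly the phase-space decomposition is made in terms of $|v_3-w_3|$ (with a separate region for small $|w_\perp|$), not in terms of $|v-w|$.

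Second, ``choosing $\alpha$ large enough'' goes in the wrong direction. The passage-time estimate bounds $\int_t^{t+\bar\Delta}\chi(|Z|>l)\,|Z|^{-2}\,ds$ by a quantity essentially independent of $\bar\Delta$ (cf.\ (\ref{eq1})--(\ref{wer})), so the corresponding contribution to the \emph{average} $\langle E\rangle_{\bar\Delta}$ scales like $1/\bar\Delta$, and one needs $\bar\Delta$ \emph{large} (of order $\mathcal{V}^{-2/3-\eta}$) to get below $\mathcal{V}^{2/3}$. On the other hand, the approximate conservation of relative velocities over $[t,t+\bar\Delta]$, which your straight-line step requires, can only be extracted from the a priori bound $|E|\le C\mathcal{V}^{4/3+\mu/(3(\tau-1))}$ of Proposition \ref{prop2} if $\bar\Delta\lesssim \mathcal{V}^{-4/3-\mu/(3(\tau-1))-\eta}$, as in (\ref{d1}); these two requirements are incompatible in a single step. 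This is precisely why the paper introduces the geometric hierarchy $\Delta_\ell=\Delta_1\mathcal{G}^{\ell-1}$ and the inductive estimate (\ref{avern}): the improved average on $\Delta_{\ell-1}$ legitimates Lemmas \ref{lemv3}, \ref{lemperp}, \ref{lemrect} on the longer interval $\Delta_\ell$, and after finitely many steps the defective term drops below $\mathcal{V}^{\gamma}$. Without this bootstrap your argument cannot close, however the cutoffs $a$ and $b$ are balanced.
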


In the proof of Proposition \ref{field}, we need to introduce several {\textit{positive}} parameters,  which we list here, for the convenience of the reader:
\begin{equation}
\begin{split}
&\frac{\mu}{2(\tau-1)}<\eta <\frac23-\frac{\mu}{\tau-1}, \quad  \frac43<\beta<2-\eta
-\frac{\mu}{\tau-1},\\& 0<\delta < 2-\eta-\beta-\frac{\mu}{\tau-1}:=c, \quad   0<\xi <\frac14 -\frac38\frac{\mu}{\tau-1}
\end{split} \label{para}
\end{equation} 
and in the sequel all the estimates hold  for {\textit{any}} choice of the previous parameters in the specified ranges
(note that the above intervals are non empty for what stated in Corollary \ref{coro}, i.e. $\frac{\mu+1}{\tau-1}<\frac49$).

\medskip

From now on we will skip the index $N$ through the whole section. Moreover, we put for brevity ${\cal{V}}:={\cal{V}}(T)$.

\begin{proof}

Let us define a time interval 
\begin{equation}
\Delta_1:=
\frac{1}{4 C_6{\cal{V}}^{\frac43+\frac{\mu}{3(\tau-1)}+\eta}}    \label{d1}
\end{equation}
where $C_6$ is the constant in (\ref{C1}). We  have to prove an estimate on $E$ analogous to  that in  \cite{KRM16}.
 For a positive integer $\ell$ we set:
\begin{equation}
\Delta_\ell= \Delta_{\ell-1}\mathcal{G} = ...=\Delta_1  \mathcal{G}^{\ell-1}, \label{dl}
\end{equation}
where 
\begin{equation}
\mathcal{G}= {\textnormal{Intg}}\left({\cal{V}}^{\delta} \right),
\label{effe_t}
\end{equation}
where ${\textnormal{Intg}}(a)$ is the integer part of $a.$ 
The integer part in (\ref{effe_t}) is taken in order to use  well known
properties of the average over intervals which increase by an integer factor (see Remark \ref{rem_av} in Section \ref{lemmas_proved}).

$Assume$ that the following estimate holds (it will be established in the next subsection \ref{proof_avern}), for any positive integer $\ell:$
\begin{equation}
\langle E \rangle_{\Delta_\ell} \leq
C \left[  {\cal{V}}^\gamma   +
   \frac{{\cal{V}}^{\frac43+\frac{\mu}{3(\tau-1)} -c}}{ {\cal{V}}^{\delta (\ell -1)}}  \right].
\label{avern}
\end{equation}
 Hence, defining $\bar\ell$ as the smallest integer such that
\begin{equation}
\delta (\bar\ell -1) > \frac23+\frac{\mu}{3(\tau-1)} -c,     
\label{Lt}
\end{equation}
estimate (\ref{avern}) implies (\ref{averna}) with $\bar{\Delta}:=\Delta_{\bar\ell}$, and this concludes
the proof of Proposition \ref{media}.
\end{proof}

\noindent It can be seen that 
\begin{equation*}
\frac{C}{\mathcal{V}^{\frac23 +\eta+c}}\leq\bar{\Delta} 
\leq \frac{C}{\mathcal{V}^{\frac23 +\eta}}.     
\end{equation*}
\medskip

\noindent We observe that Proposition \ref{field} follows immediatly from Proposition \ref{media},
by dividing the time integration interval in $n$ subintervals such that we can apply Proposition \ref{media} on each of them.
Then  in order to prove Proposition \ref{field}, we have to prove that (\ref{avern}) holds, which we will do in the next subsection. 

\subsection{Proof of (\ref{avern})}
\label{proof_avern}

\medskip

To prove (\ref{avern}) we need some preliminary results, which are stated here and proved in Section \ref{lemmas_proved}.

 We consider two solutions of the partial dynamics, $\left(X(t),V(t)\right)$ and $\left(Y(t),W(t)\right),$ starting from $(x,v)$ and $(y,w)$ respectively. Let $\eta$ be the parameter introduced in the definition (\ref{d1}) of $\Delta_1.$ Then we have:

\begin{lemma}\label{lemv3}
 Let $t\in [0,T]$ such that $t+\Delta_\ell\in [0,T]$.
$$
\hbox{If} \qquad |V_3(t)-W_3(t)|\leq {\cal{V}}^{-\eta} $$
then 
\begin{equation}
\sup_{s\in [t, t+\Delta_\ell]}|V_3(s)-W_3(s)|\leq 2{\cal{V}}^{-\eta}. \label{L1}
\end{equation}

\medskip

$$
\hbox{If} \qquad|V_3(t)-W_3(t)|\geq {\cal{V}}^{-\eta}
$$
then 
\begin{equation}
\inf_{s\in [t, t+\Delta_\ell]}|V_3(s)-W_3(s)|\geq \frac12 {\cal{V}}^{-\eta}. \label{L2} 
\end{equation}
\end{lemma}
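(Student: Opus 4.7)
The starting observation is that $V_3$ decouples from both the magnetic force and from $\nabla U$: with $B(x)=(0,0,h(x_1))$ one has $(v\wedge B)_3=0$, while $U=U(x_1)$ gives $\partial_3 U\equiv 0$. Consequently the third component of the characteristic equations (\ref{chN}) reduces to $\dot V_3(s)=E_3(X(s),s)$, and similarly $\dot W_3(s)=E_3(Y(s),s)$. Integrating from $t$ to an arbitrary $s\in[t,t+\Delta_\ell]$ I obtain
\begin{equation*}
V_3(s)-W_3(s)=V_3(t)-W_3(t)+\int_t^{s}\bigl[E_3(X(u),u)-E_3(Y(u),u)\bigr]\,du.
\end{equation*}

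Both conclusions of the lemma will then follow at once from the single claim
\begin{equation*}
\mathcal{J}_\ell:=\int_t^{t+\Delta_\ell}\bigl(|E_3(X(u),u)|+|E_3(Y(u),u)|\bigr)\,du\leq \frac{1}{2}\mathcal{V}^{-\eta},
\end{equation*}
with (\ref{L1}) coming from the triangle inequality and (\ref{L2}) from the reverse triangle inequality.

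For $\ell=1$ the bound on $\mathcal{J}_1$ is immediate: Proposition \ref{prop2} yields $|E_3|\leq|E|\leq C_6\mathcal{V}^{4/3+\mu/(3(\tau-1))}$, and the very definition (\ref{d1}) of $\Delta_1$ has been engineered so that $2C_6\mathcal{V}^{4/3+\mu/(3(\tau-1))}\Delta_1=\frac{1}{2}\mathcal{V}^{-\eta}$. For $\ell>1$ the time window $\Delta_\ell=\Delta_1\mathcal{G}^{\ell-1}$ is too long for the pointwise bound to suffice, and my plan is to invoke the averaged estimate (\ref{avern}) at the preceding level: partition $[t,t+\Delta_\ell]$ into $\mathcal{G}$ adjacent sub-intervals of length $\Delta_{\ell-1}$ and apply (\ref{avern}) on each, separately along the characteristics $X(\cdot)$ and $Y(\cdot)$. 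A short algebraic verification using $\Delta_\ell\leq\bar\Delta\leq C\mathcal{V}^{-2/3-\eta}$, together with $\gamma<2/3$ and the parameter inequalities (\ref{para}), then absorbs both pieces into $\frac{1}{2}\mathcal{V}^{-\eta}$.

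The main obstacle is therefore not this lemma in isolation but its intertwining with (\ref{avern}): Lemma \ref{lemv3} at level $\ell$ is driven by the averaged bound at level $\ell-1$, which was itself established using Lemma \ref{lemv3} at level $\ell-1$. The two statements must be run together as a single induction on $\ell$, whose base case is furnished by Proposition \ref{prop2} and whose closability at every subsequent level is exactly the reason why the parameter ranges in (\ref{para}) were chosen as they were.
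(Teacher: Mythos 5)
Your proposal is correct and follows essentially the same route as the paper: the decoupling of the third velocity component from the magnetic and external forces, the bound $2C_6\mathcal{V}^{4/3+\mu/(3(\tau-1))}\Delta_1=\frac12\mathcal{V}^{-\eta}$ for $\ell=1$, and for $\ell>1$ the subdivision of $[t,t+\Delta_\ell]$ into $\mathcal{G}$ sub-intervals of length $\Delta_{\ell-1}$ combined with the averaged bound (\ref{avern}) at level $\ell-1$ (the paper's Remark \ref{rem_av}), all within the joint induction on $\ell$ that you correctly identify.
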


\medskip
\noindent For a generic vector $v$ we put $v_{\perp}=(v_1,v_2, 0).$
\begin{lemma}\label{lemperp}
 Let $t\in [0,T]$ such that $t+\Delta_\ell \in [0,T]$.  
 $$
\hbox{If} \quad |V_\perp(t)|\leq {\cal{V}}^{\xi} 
$$
then
\begin{equation}
\sup_{s\in [t, t+\Delta_\ell]}|V_\perp(s)| \leq 2{\cal{V}}^{\xi} .   \label{L3}
\end{equation}

\medskip

\noindent 
$$
\hbox{If} \quad |V_\perp(t)|\geq {\cal{V}}^{\xi}   
$$
then
\begin{equation}
\inf_{s\in [t, t+\Delta_\ell]}|V_\perp(s)|\geq \frac{ {\cal{V}}^{\xi} }{2}.    \label{L4}
\end{equation}
\end{lemma}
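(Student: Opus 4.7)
The key structural observation is that the Lorentz force associated with the axial magnetic field does not affect the length of $V_\perp$. With $B(X) = (0,0,h(X_1))$ one computes $V\wedge B = (V_2 h, -V_1 h, 0)$, which lies in the perpendicular plane but is orthogonal to $V_\perp = (V_1, V_2, 0)$. Moreover, since $U$ depends on $x_1$ only, $(\nabla U)_\perp = (U'(X_1), 0, 0)$, whose modulus is bounded by $|\nabla U|$. Taking the scalar product of the second equation of (\ref{chN}) with $V_\perp(s)$ I therefore get
\begin{equation*}
\tfrac12\,\frac{d}{ds}|V_\perp(s)|^2 \;=\; V_\perp(s)\cdot E_\perp(X(s),s) \;-\; V_1(s)\,U'(X_1(s)),
\end{equation*}
and dividing by $|V_\perp(s)|$ and using Cauchy--Schwarz yields the one-sided Lipschitz bound
\begin{equation*}
\left|\frac{d}{ds}|V_\perp(s)|\right| \;\leq\; |E(X(s),s)| \;+\; |\nabla U(X(s))|.
\end{equation*}

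Integration on $[t,s]\subset[t,t+\Delta_\ell]$ gives
\begin{equation*}
\bigl|\,|V_\perp(s)| - |V_\perp(t)|\,\bigr| \;\leq\; \int_t^{t+\Delta_\ell}\bigl(|E(X(r),r)| + |\nabla U(X(r))|\bigr)\,dr.
\end{equation*}
If the right-hand side is bounded by $\mathcal{V}^{\xi}/2$, both conclusions follow immediately: for (\ref{L3}) one has $|V_\perp(s)| \leq |V_\perp(t)| + \mathcal{V}^{\xi}/2 \leq \tfrac32\mathcal{V}^{\xi} < 2\mathcal{V}^{\xi}$, and for (\ref{L4}) one has $|V_\perp(s)| \geq |V_\perp(t)| - \mathcal{V}^{\xi}/2 \geq \mathcal{V}^{\xi}/2$.

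To estimate the integral I would use the pointwise bound $|E(X(r),r)| \leq C_6 \mathcal{V}^{4/3 + \mu/(3(\tau-1))}$ from Proposition~\ref{prop2} together with $|\nabla U(X(r))| \leq C\mathcal{V}^{(\mu+1)/(\tau-1)}$ from (\ref{nablaW}). Under the assumption (\ref{rel_taumu}) the first exponent dominates, so that
\begin{equation*}
\int_t^{t+\Delta_\ell}\bigl(|E| + |\nabla U|\bigr)\,dr \;\leq\; C\,\Delta_\ell\,\mathcal{V}^{4/3 + \mu/(3(\tau-1))}.
\end{equation*}
For $\Delta_\ell = \Delta_1 = 1/(4C_6\mathcal{V}^{4/3+\mu/(3(\tau-1))+\eta})$ this is at most $\tfrac14\mathcal{V}^{-\eta}$, which is much smaller than $\mathcal{V}^\xi/2$ since $\xi,\eta>0$ and $\mathcal{V}\geq C_3$ with $C_3$ chosen large. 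For the general $\ell$ one uses $\Delta_\ell \leq \bar\Delta \leq C\mathcal{V}^{-2/3 - \eta}$ and the explicit form of $\mathcal{G}$ in (\ref{effe_t}).

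The main obstacle is not conceptual but bookkeeping: for $\ell$ close to $\bar\ell$ the margin is tight, and one has to check that the ranges of $\xi,\eta,\beta,\delta$ declared in (\ref{para}) force $\Delta_\ell\mathcal{V}^{4/3 + \mu/(3(\tau-1))} \leq \tfrac12\mathcal{V}^\xi$. Should the pointwise estimate of Proposition~\ref{prop2} turn out to be insufficient in this regime, one replaces it by the time-averaged bound already proved for the preceding index in the induction leading to (\ref{avern}), which gives the sharper estimate $\int_t^{t+\Delta_\ell}|E|\,dr \leq C\Delta_\ell \mathcal{V}^\gamma$ with $\gamma<2/3$. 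The conceptual point, namely that $B\parallel e_3$ prevents the magnetic force from rotating $V_\perp$ out of the $(x_1,x_2)$-plane or changing its modulus, is what makes the lemma true at all.
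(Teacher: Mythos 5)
Your proposal is correct and follows essentially the same route as the paper: both rest on the observation that $V\wedge B$ is orthogonal to $V_\perp$, bound the variation of $|V_\perp|$ by $\int(|E|+|\nabla U|)$ over $\Delta_\ell$ using the pointwise estimate of Proposition~\ref{prop2} and the definition of $\Delta_1$ for $\ell=1$, and correctly note that for larger $\ell$ one must switch to the inductively available time-averaged bound (\ref{avern}) at level $\ell-1$, exactly as the paper does. The only difference is cosmetic: you integrate the differential inequality for $|V_\perp|$ directly, whereas the paper runs a first-exit contradiction argument on $|V_\perp|^2$.
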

\begin{lemma}\label{lemrect}
 Let $t\in [0,T]$ such that  $t+\Delta_\ell\in [0,T]$       
  and assume that $|V_3(t)-W_3(t)|\geq h {\cal{V}}^{-\eta}$ for some $h\geq 1$. 
Then it exists $t_0\in [t, t+\Delta_\ell]$ such that for any $s\in [t, t+\Delta_\ell] $ it holds:
$$
|X(s)-Y(s)|\geq \frac{h {\cal{V}}^{-\eta}}{4}|s-t_0|.
$$
\end{lemma}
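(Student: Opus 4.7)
\textbf{Proof plan for Lemma \ref{lemrect}.}

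The crucial structural fact is that the third component of the equation of motion (\ref{chN}) is decoupled from both the magnetic Lorentz force and the confining potential: since $B=(0,0,h(X_1))$ and $\nabla U$ is parallel to $e_1$, the vector $V\wedge B-\nabla U$ has vanishing third component. Therefore $\dot V_3(s)=E_3(X(s),s)$ along any characteristic, and comparing the two solutions $(X(s),V(s))$ and $(Y(s),W(s))$ gives
$$
V_3(s)-W_3(s)\;=\;V_3(t)-W_3(t)+\int_t^s\bigl[E_3(X(r),r)-E_3(Y(r),r)\bigr]\,dr.
$$

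First I would upgrade Lemma \ref{lemv3} to the hypothesis $|V_3(t)-W_3(t)|\ge h\mathcal{V}^{-\eta}$ with $h\ge 1$. Inspecting the proof of Lemma \ref{lemv3} one sees that what is really established is an absolute bound
$$
\sup_{s\in[t,t+\Delta_\ell]}\bigl|(V_3(s)-W_3(s))-(V_3(t)-W_3(t))\bigr|\;\le\;\tfrac12\mathcal{V}^{-\eta},
$$
driven only by the length of $\Delta_\ell$ and the averaged field estimate, and independent of the initial size of the gap. Consequently
$$
\inf_{s\in[t,t+\Delta_\ell]}|V_3(s)-W_3(s)|\;\ge\;h\mathcal{V}^{-\eta}-\tfrac12\mathcal{V}^{-\eta}\;\ge\;\tfrac{h}{2}\mathcal{V}^{-\eta},
$$
so in particular $V_3-W_3$ retains a constant sign throughout $[t,t+\Delta_\ell]$.

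Now set $\phi(s):=X_3(s)-Y_3(s)$. Then $\phi'=V_3-W_3$ has constant sign and modulus at least $\tfrac{h}{2}\mathcal{V}^{-\eta}$, hence $\phi$ is strictly monotone on $[t,t+\Delta_\ell]$. I define $t_0\in[t,t+\Delta_\ell]$ as follows: if $\phi$ vanishes somewhere in the interval, let $t_0$ be the (unique) zero; otherwise let $t_0$ be the endpoint where $|\phi|$ is smallest. In the first case $\phi(t_0)=0$; in the second case $\phi(t_0)$ has the same sign as $\phi(s)$ for every $s\in[t,t+\Delta_\ell]$. In either case the two terms in
$$
\phi(s)\;=\;\phi(t_0)+\int_{t_0}^s\phi'(r)\,dr
$$
share the same sign, whence
$$
|\phi(s)|\;\ge\;\Bigl|\int_{t_0}^s\phi'(r)\,dr\Bigr|\;\ge\;\tfrac{h}{2}\mathcal{V}^{-\eta}\,|s-t_0|.
$$
Since $|X(s)-Y(s)|\ge|X_3(s)-Y_3(s)|=|\phi(s)|$, the claim follows (in fact with the sharper constant $\tfrac12$).

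The only delicate step is the first: one must check that the deviation estimate of Lemma \ref{lemv3} is genuinely uniform in $h$, i.e.\ that the change of $V_3-W_3$ over the interval $[t,t+\Delta_\ell]$ depends only on the integrated field difference along the two trajectories and on $\Delta_\ell$, not on $|V_3(t)-W_3(t)|$ itself. Once this uniformity is in hand the rest of the argument is purely geometric, and the factor $\tfrac14$ in the statement leaves comfortable slack over the $\tfrac12$ produced by the above estimate.
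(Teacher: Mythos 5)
Your argument is correct, and it reaches the stated conclusion (in fact with the better constant $\tfrac12$ in place of $\tfrac14$). It shares the paper's two key ingredients — the observation that $V\wedge B-\nabla U$ has no third component, and the $h$-uniform version of Lemma \ref{lemv3} (your preliminary check that the drift of $V_3-W_3$ over $[t,t+\Delta_\ell]$ is at most $\tfrac12\mathcal{V}^{-\eta}$ independently of the initial gap is exactly right, and the paper also tacitly relies on this when it invokes ``Lemma \ref{lemv3} adapted with a factor $h\ge 1$''). Where you diverge is in the mechanics. The paper works at second order: it takes $t_0$ as the minimizer of $|\Gamma|$ with $\Gamma=X_3-Y_3$, compares $\Gamma$ to its linearization $\bar\Gamma(s)=\Gamma(t_0)+\dot\Gamma(t_0)(s-t_0)$, bounds the doubly integrated field difference by $\tfrac{\mathcal{V}^{-\eta}}{4}|s-t_0|$ using the length of $\Delta_1$, and then expands $|\bar\Gamma(s)|^2$, using that the cross term $\Gamma(t_0)\dot\Gamma(t_0)(s-t_0)$ is nonnegative at a minimizer; the subtracted error is what degrades the constant from $\tfrac12$ to $\tfrac14$. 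You instead stay at first order: constancy of the sign and the lower bound $\tfrac{h}{2}\mathcal{V}^{-\eta}$ on $|\phi'|$ make $\phi$ strictly monotone, and your choice of $t_0$ (the zero of $\phi$, or the endpoint minimizing $|\phi|$) guarantees that $\phi(t_0)$ and $\int_{t_0}^s\phi'$ have the same sign, so they add constructively. This is cleaner, avoids the comparison trajectory and the quadratic expansion entirely, and loses no constant. For completeness you should note, as the paper does, that the case $\ell>1$ goes through verbatim because your only input is the integrated field bound, which at level $\ell$ follows from estimate (\ref{avern}) at level $\ell-1$ via Remark \ref{rem_av}.
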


Estimate (\ref{avern}) will be proved  analogously to what has been done in \cite{KRM16}. 
We use an inductive procedure, that is:

\noindent
$step\ i)$ we prove (\ref{avern}) for $\ell=1;$ 

\noindent
$step\ ii)$ we show that if (\ref{avern}) holds for $\ell-1$ it holds also for $\ell.$  

Step $i)$ is the fundamental one, while step $ii)$ is an almost immediate consequence, as it will be seen after.

\medskip
\noindent
\textbf{Proof  of step \textit{i}).}

All the parameters appearing in this demonstration have been listed in (\ref{para}). 

We have to show that the following estimate holds:
\begin{equation}
\langle E \rangle_{\Delta_{1}} \leq
C \left[  {\cal{V}}^\gamma +
   {\cal{V}}^{\frac43+\frac{\mu}{3(\tau-1)}-c}   \right].
\label{avern1}
\end{equation}
We fix any $t\in [0,T]$ such that $t+\Delta_1\leq T$,  and we consider the time evolution of the characteristics over the time interval 
$[t,  t+\Delta_1]$.  For any $s\in [t, t+\Delta_1]$ we set 
\begin{equation*}
\begin{split}
&(X(s),V(s)):=(X(s,t,x,v),V(s,t,x,v)), \quad X(t)=x\\& (Y(s),W(s)):=(Y(s,t,y,w),W(s,t,y,w)), \quad Y(t)=y.
\end{split}\end{equation*}

Then
\begin{equation}
\begin{split}
|E(X(s),s)|\leq & \int d\xi d\omega \,  \frac{f(\xi,\omega,s)}{|X(s)-\xi|^2}=
\int dydw \ \frac{ f(y,w,t)}{|X(s)-Y(s)|^2}.\label{Ei}
\end{split}
\end{equation}

We decompose the phase space $(y,w)\in {\mathbb{R}}^3\times {\mathbb{R}}^3$ in the following way. We define

\begin{equation}
T_1=\{(y,w): y_1>0, \,\, |y-x|\leq 2R(T)\} 
\label{T1}
\end{equation}
\begin{equation}
S_1= \{ (y,w): |v_3-w_3|\leq {\cal{V}}^{-\eta} \}\label{S1}
\end{equation}
\begin{equation}
S_2=\{ (y,w): \ |w_{\perp}|\leq {\cal{V}}^{\xi} \}\label{S2}
\end{equation}
\begin{equation}
S_3=\{ (y,w): \  |v_3-w_3|> {\cal{V}}^{-\eta} \}\cap  \{ (y,w): |w_{\perp}|> {\cal{V}}^{\xi} \}.
\end{equation}
 We have
\begin{equation}
|E(X(s),s)|\leq\sum_{j=1}^4{\mathcal{I}}_j(X(s))\label{sum}
\end{equation}
where for any $s\in [t, t+\Delta_1]$
\begin{equation*}
{\mathcal{I}}_j(X(s))=\int 
_{T_1\cap S_j}dydw \  \frac{f(y,w,t)}{|X(s)-Y(s)|^2}, \quad \quad  j=1,2,3 
\end{equation*}
and 
\begin{equation*}
{\mathcal{I}}_4(X(s))=\int 
_{T_1^c}dydw \  \frac{f(y,w,t)}{|X(s)-Y(s)|^2}.
\end{equation*}
Let us start by the first integral. By the change of variables $(Y(s),W(s))=(\bar{y},\bar{w})$,
and Lemma \ref{lemv3} we get
\begin{equation}
{\mathcal{I}}_1(X(s))\leq \int _{T_1'\cap S_1'}d\bar{y}d\bar{w}  \frac{f(\bar{y},\bar{w},s)}{|X(s)-\bar{y}|^2}
\end{equation}
where $T_1'=\{(\bar{y}, \bar{w}):  \bar{y}_1>0, \,\, |\bar{y}-X(s)|\leq 4R(T)\}$ and $S_1'= \{ (\bar{y},\bar{w}): |V_3(s)-\bar{w}_3|\leq 2 {\cal{V}}^{-\eta} \}.$ 
 Now it is:
 \begin{equation}
\begin{split}
{\mathcal{I}}_1(X(s))\leq &\int _{T_1'\cap S_1'\cap \{|X(s)-\bar{y}|\leq \varepsilon\}}d\bar{y}d\bar{w} \frac{\  f(\bar{y},\bar{w},s)}{|X(s)-\bar{y}|^2}+\\&
\int _{T_1'\cap S_1'\cap \{  |X(s)-\bar{y}|>\varepsilon\}}d\bar{y}d\bar{w}\ \frac{f(\bar{y},\bar{w},s)}{|X(s)-\bar{y}|^2}\leq \\& C{\cal{V}}^{2-\eta} \varepsilon+ \int _{T_1'\cap S_1'\cap \{  |X(s)-\bar{y}|>\varepsilon\}}d\bar{y}d\bar{w}\ \frac{f(\bar{y},\bar{w},s)}{|X(s)-\bar{y}|^2}.\end{split}
\label{i11}\end{equation}
Now we give a bound on the spatial density $\rho(\bar{y},s).$ Setting 
$$
\rho_1(y,s)=\int_{S_1'} dw f(y,w,s),
$$
we have:
\begin{equation}
\begin{split}
&\rho_1(y,s)\leq  C\, {\cal{V}}^{-\eta} \int_{|w_{\perp}|\leq a} dw_{\perp}+ \nonumber 
 \int_{|w_{\perp}|> a}dw_{\perp}\int dw_3 \, f(y,w,s) \leq   \nonumber \\
& C  a^2 {\cal{V}}^{-\eta}+\frac{1}{a^2}\int dw |w|^2f(y,w,s)=C  a^2 {\cal{V}}^{-\eta}+\frac{1}{a^2}K(y,s)
\nonumber
\end{split}
\end{equation}
where $K(y,s)=\int dw |w|^2f(y,w,s).$ Minimizing in $a$ we obtain
\begin{equation}
\rho_1(y,s)\leq C {\cal{V}}^{-\frac{\eta}{2}} K(y,s)^{\frac12}.\label{K}
\end{equation}

Hence from (\ref{K}) and (\ref{kinet}) we get
\begin{equation}
\begin{split}
&\Bigg(\int _{T_1'}dy\ \rho_1(y,s)^2\Bigg)^{\frac12}\leq \, C {\cal{V}}^{-\frac{\eta}{2}} \Bigg(\int_{T_1'} dy\ K(y,s)\Bigg)^{\frac12}  \\
& \leq  C {\cal{V}}^{-\frac{\eta}{2}} \left(   {\mathcal{V}}^{\frac{\mu}{\tau -1}}  \right)^{\frac12}
=C {\cal{V}}^{-\frac{\eta}{2}+\frac{\mu}{2(\tau -1)}}. 
\end{split}
\label{rho1}
\end{equation}
 Going back to (\ref{i11}), this bound implies
\begin{equation*}
\begin{split}
&{\mathcal{I}}_1(X(s))\leq\\&  C{\cal{V}}^{2-\eta} \varepsilon+\Bigg(\int_{T_1'} d\bar{y} \,\rho_1(\bar{y},s)^2\Bigg)^{\frac12}\Bigg(\int_{T_1'\cap  \{|X(s)-\bar{y}|>\varepsilon\}} d\bar{y} \  \frac{1}{|X(s)-\bar{y}|^4}\Bigg)^{\frac12} \\&
\leq C\left({\cal{V}}^{2-\eta}  \varepsilon+\sqrt{\frac{{\cal{V}}^{-\eta+\frac{\mu}{\tau-1}} }{\varepsilon}}\right).
\end{split}
\end{equation*}
Minimizing in $\varepsilon$ we obtain:
\begin{equation}
{\mathcal{I}}_1(X(s))\leq C {\cal{V}}^{\frac23-\frac{2\eta}{3}+\frac{\mu}{3(\tau-1)}}  \label{I1}
\end{equation}
 then the lower bound for $\eta$ in (\ref{para})  implies that ${\mathcal{I}}_1(X(s))$
is bounded by a power of ${\cal{V}}$ less than $\frac23$.

Now we pass to the term ${\mathcal{I}}_2.$ Proceeding as for the term ${\mathcal{I}}_1,$ defining $S_2'=\{ (y,w): \ |w_{\perp}|\leq 2 {\cal{V}}^{\xi} \},$  by Lemma \ref{lemperp} and the H\"older's inequality we get:
$$
{\mathcal{I}}_2(X(s))\leq  \int _{T_1'\cap S_2'}d\bar{y}d\bar{w} \frac{\  f(\bar{y},\bar{w},s)}{|X(s)-\bar{y}|^2}\leq 
$$
$$
 \int _{T_1'\cap S_2'\cap \{|X(s)-\bar y|\leq \varepsilon\}}d\bar{y}d\bar{w} \frac{\  f(\bar{y},\bar{w},s)}{|X(s)-\bar{y}|^2}+
 \int _{T_1'\cap \{  |X(s)-\bar{y}|>\varepsilon\}}d\bar{y} \frac{  \rho(\bar{y},s)}{|X(s)-\bar{y}|^2}\leq $$
$$
C {\cal{V}}^{1+2\xi}\varepsilon+
\left(\int_{T_1'}d\bar{y} \ \rho(\bar{y},s)^{\frac53}\right)^{\frac35}\left(\int_{ \{|X(s)-\bar{y}|> \varepsilon\}} d\bar{y} \  \frac{1}{|X(s)-\bar{y}|^5}\right)^{\frac25}.
$$
The bound (\ref{lem1}) implies
$$
{\mathcal{I}}_2(X(s))\leq C{\cal{V}}^{1+2\xi}\varepsilon+C {\cal{V}}^{\frac{3\mu}{5(\tau-1)}} \varepsilon^{-\frac45}.
$$
By minimizing in $\varepsilon$ we get:
\begin{equation}
{\mathcal{I}}_2(X(s))\leq C {\cal{V}}^{\frac49+\frac89 \xi +\frac{\mu}{3(\tau-1)}}.    
\label{I2}
\end{equation}
Analogously to what we have seen before for the term ${\mathcal{I}}_1(X(s)),$ also in this case the upper bound for the parameter $\xi$ in (\ref{para}) implies that ${\mathcal{I}}_2(X(s))$
is bounded by a power of ${\cal{V}}$ less than $\frac23.$

 Now we estimate ${\mathcal{I}}_3(X(s)).$ It will be clear in the sequel that it is because of this term that we are forced to bound the time average of $E,$ and then to iterate the bound, from smaller to larger time intervals. 
 
 We cover $ T_1\cap S_3$ by means of the sets
$A_{h,k}$ and  $B_{h,k}$, with 
${k=0,1,2,...,m}$ and ${h=1,2,...,m'}$, defined in the following way:
\begin{equation}
\begin{split}
A_{h,k}=\big\{ &(y,w): \ h {\cal{V}}^{-\eta}< |v_3-w_3|\leq (h+1) {\cal{V}}^{-\eta},\\& \alpha_{k+1}< |w_{\perp}|\leq \alpha_k, \  |X(s)-Y(s)|\leq l_{h,k}\big\}
\end{split}\label{Ak}
\end{equation}
\begin{equation}
\begin{split} 
B_{h,k}=\big\{ &(y,w):\ h {\cal{V}}^{-\eta}< |v_3-w_3|\leq (h+1) {\cal{V}}^{-\eta},\\& \alpha_{k+1}< |w_{\perp}|\leq \alpha_k, \  |X(s)-Y(s)|> l_{h,k}\big\}
\end{split}\label{Bk}
\end{equation}
where:
\begin{equation}
\alpha_k=\frac{{\cal{V}}}{2^k} \quad \quad l_{h,k}=\frac{2^{2k} }{h{\cal{V}}^{\beta-\eta} },
\label{al}
\end{equation}
with $\beta$ chosen in (\ref{para}).
Consequently we put
\begin{equation}
{\cal{I}}_3(X(s))\leq\sum_{h=1}^{m'} \sum_{k=0}^m\left({\cal{I}}_3'(h,k)+{\cal{I}}_3''(h,k)\right) \label{23}
\end{equation}
being
\begin{equation}
{\cal{I}}_3'(h,k)=\int_{T_1\cap A_{h,k}} \frac{f(y,w,t)}{|X(s)-Y(s)|^2} \, dydw
\end{equation}
and
\begin{equation}
{\cal{I}}_3''(h,k)=\int_{T_1\cap B_{h,k}} \frac{f(y,w,t)}{|X(s)-Y(s)|^2} \, dydw.
\end{equation}
Since we are in $S_3,$ it is immediately seen that 
\begin{equation}
m\leq C \log {\cal{V}},  \qquad m'\leq C {\cal{V}}^{1+\eta}.\label{par}
\end{equation} 
By adapting Lemma \ref{lemv3} and Lemma \ref{lemperp} to this context it is easily seen that $\forall \ (y,w)\in A_{h,k}$ it holds:
\begin{equation}
(h-1) {\cal{V}}^{-\eta}\leq |V_3(s)-W_3(s)|\leq (h+2) {\cal{V}}^{-\eta},   \label{lem31}
\end{equation}
and
\begin{equation}
\frac{\alpha_{k+1}}{2}\leq |W_{\perp}(s)|\leq 2\alpha_k.\label{lemperp1}
\end{equation}
Hence, setting
\begin{equation}\begin{split}
A_{h,k}'=&\big\{(\bar{y},\bar{w}):  \ (h-1){\cal{V}}^{-\eta}\leq|V_3(s)-\bar{w}_3|\leq(h+2){\cal{V}}^{-\eta},  \\&
\frac{\alpha_{k+1}}{2}\leq |\bar{w}_{\perp}|\leq 2\alpha_k,  
\ |X(s)-\bar{y}|\leq l_{h,k}  \big\}. 
\end{split}
\label{Akk}
\end{equation}
we have
\begin{equation}
{\cal{I}}_3' (h,k)\leq
\int_{T_1'\cap A_{h,k}'}\frac{f(\bar{y},\bar{w}, s)}{|X(s)-\bar{y}|^2} \, d\bar{y}d\bar{w}\label{int3}.
\end{equation}
By the choice of the parameters $\alpha_k$ and $l_{h,k}$ made in (\ref{al}) we have:
\begin{equation}
\begin{split}
{\cal{I}}_3' (h,k) \leq &
 \ C \, l_{h,k}\int_{A_{h,k}'} \, d\bar{w}  \leq C l_{h,k}\alpha^2_{k}\int_{A_{h,k}'}d\bar{w}_3\leq \\&
 \,\, C \, l_{h,k} \alpha_{k}^2  {\cal{V}}^{-\eta} 
\leq C\frac{{\cal{V}}^{2-\beta} }{h }. 
\end{split}
\end{equation}

Hence by (\ref{par})
\begin{equation}
\sum_{h=1}^{m'} \sum_{k=0}^m{\cal{I}}_3'(h,k)\leq C {\cal{V}}^{2-\beta}  \sum_{k=0}^m \sum_{h=1}^{m'}\frac{1}{h}\leq 
C {\cal{V}}^{2-\beta} \log^2{\cal{V}}.
\label{i3}\end{equation}
The lower bound for $\beta$ in (\ref{para}) is such that 
\begin{equation*}
\sum_{h=1}^{m'} \sum_{k=0}^m{\cal{I}}_3'(h,k)\leq C {\cal{V}}^\gamma \qquad \gamma <\frac23.
\end{equation*}

Now we pass to ${\mathcal{I}}_3''(h,k).$
  Setting
\begin{equation}
\begin{split}
C_{h,k}=\big\{& (y,w): \, h {\cal{V}}^{-\eta}\leq |v_3-w_3|\leq (h+1) {\cal{V}}^{-\eta},\\& \alpha_{k+1}\leq |w_{\perp}|\leq \alpha_k\big\},
\end{split}
\label{Bhk}
\end{equation}
 we have:
\begin{equation}
\begin{split}
\int_{t}^{t+\Delta_1}& {\mathcal{I}}_3''(h,k)\ ds\leq   \\
&\int_{t}^{t+\Delta_1}ds\int_{T_1\cap C_{h,k}}dydw \, \frac{ f(y,w,t)}{|X(s)-Y(s)|^2} \chi(|X(s)-Y(s)|> l_{h,k})  \leq
\\&\int_{T_1\cap C_{h,k}} f(y, w, t) \left( \int_{t}^{t+\Delta_1} \frac{\chi(|X(s)-Y(s)|> l_{h,k})}{|X(s)-Y(s)|^2} \, ds \right)\, dy dw.
\end{split}
\label{ik}
\end{equation}
By Lemma \ref{lemrect},
putting $a = \frac{4 \, l_{h,k}{\cal{V}}^{\eta}}{h }$  we have:
\begin{equation}
\begin{split}
&\int_{t}^{t+\Delta_1} \frac{\chi(|X(s)-Y(s)|> l_{h,k})}{|X(s)-Y(s)|^2} \, ds 
\leq  \,\,   \\
&\int_{\{ s: |s-t_0|\leq a \}} \frac{\chi(|X(s)-Y(s)|> l_{h,k})}{|X(s)-Y(s)|^2} \, ds \, + \\
&\int_{\{ s: |s-t_0| > a \}} \frac{\chi(|X(s)-Y(s)|> l_{h,k})}{|X(s)-Y(s)|^2} \, ds   \leq  \\
&\frac{1}{l_{h,k}^2} \int_{\{ s: |s-t_0|\leq a \}} \, ds
+\frac{C\, {\cal{V}}^{2\eta}}{h^2} \int_{\{ s: |s-t_0| > a \}} \frac{1}{ |s-t_0|^2} \, ds \leq \\
& \quad \frac{2 a}{l_{h,k}^2} + \frac{C\,{\cal{V}}^{2\eta}}{h^2 } \int_a^{+\infty} \frac{1}{s^2} \, ds
= \frac{C\, {\cal{V}}^{\eta}}{ l_{h,k}h }.
\end{split}
\label{eq1}\end{equation}
Moreover:
\begin{equation}
\begin{split}
\int_{T_1\cap C_{h,k}} f(y, w, t)\, dydw&\leq \frac{C}{\alpha_k^2}\int_{T_1\cap C_{h,k}} w^2 f(y, w, t) \, dydw \\& \end{split}
\label{eq2}
\end{equation}
so that, by (\ref{kinet}),
\begin{equation}
\sum_{h=1}^{m'} \sum_{k=0}^m \int_{T_1\cap C_{h,k}} w^2 f(y, w,t) \, dydw
\leq C \int_{T_1} K(y,t) \, dy \leq C {\cal{V}}^{\frac{\mu}{\tau-1}}. 
\label{i5}
\end{equation}

Taking into account (\ref{al}), by (\ref{ik}), (\ref{eq1}), (\ref{eq2}) and (\ref{i5}) we get:
\begin{equation}
\sum_{h=1}^{m'} \sum_{k=0}^m\int_{t}^{t+\Delta_1} {\mathcal{I}}_3''(h,k)\, ds\leq  
C  {\cal{V}}^{\beta-2+\frac{\mu}{\tau-1}}. 
\label{wer}
\end{equation}
By multiplying and dividing by $\Delta_{1}$ defined in (\ref{d1}) we obtain, 
\begin{equation}
\sum_{h=1}^{m'} \sum_{k=0}^m\int_{t}^{t+\Delta_1} {\mathcal{I}}_3''(h,k)\, ds\leq \, 
C {\cal{V}}^{\frac43+\frac{\mu}{3(\tau-1)}}  {\cal{V}}^{-2+\eta+\beta+\frac{\mu}{\tau-1}}  \, \Delta_{1}. \label{i3''}
\end{equation}
Now we have 
\begin{equation}
 {\cal{V}}^{-2+\eta+\beta+\frac{\mu}{\tau-1}} 
:=  {\cal{V}}^{-c},
\label{def_c}
\end{equation}
hence, since $c$ is positive  (see (\ref{para})), the estimate (\ref{C1}) of the electric field in Proposition \ref{prop2} can be improved, at least on a short time interval $\Delta_1$, and this fact allows the inductive method to work.

\bigskip

Finally 
the bounds (\ref{I1}), (\ref{I2}), (\ref{23}), (\ref{i3}) and (\ref{i3''}) imply:
\begin{equation}
\begin{split}
&\sum_{j=1}^3\int_{t}^{t+\Delta_1} {\mathcal{I}}_j(X(s))\, ds\leq 
 C\Delta_{1}\left[{\cal{V}}^{\gamma}  +
{\cal{V}}^{\frac43+\frac{\mu}{3(\tau-1)}-c}  \right].
\end{split} \label{ave}
\end{equation}

It remains the estimate of the last term ${\mathcal{I}}_4(X(s)).$ It can be treated
as in \cite{JSP17}, obtaining
\begin{equation}
{\mathcal{I}}_4(X(s))\leq C.
\label{i6}
\end{equation}
Hence by (\ref{Ei}), (\ref{sum}) and (\ref{ave}), this last bound implies:
\begin{equation*}
 \int_{t}^{t+\Delta_1} |E(X(s),s)| \, ds\leq
C\, \Delta_{1}\left[{\cal{V}}^{\gamma}  +
{\cal{V}}^{\frac43+\frac{\mu}{3(\tau-1)}-c}  \right],
\end{equation*}
so that we have proved (\ref{avern}) for $\ell=1.$

\medskip

\noindent
\textbf{Proof  of step \textit{ii}).}

\noindent In the previous step we have seen that, starting from estimate (\ref{C1}) on $[0,T]$, we arrive at (\ref{avern1}) on $\Delta_1.$

\noindent We show now that, assuming true (\ref{avern}) over $\Delta_{\ell-1}$, then it holds also over $\Delta_{\ell}$,
obtaining so  an improved estimate.
  We note that (see Remark \ref{rem_av} in Section \ref{lemmas_proved})  if estimate (\ref{avern}) holds for $\langle E \rangle_{\Delta_{\ell-1}}$, then the same estimate holds for $\langle E \rangle_{\Delta_{\ell}},$   since both the intervals $\Delta_{\ell}$ and the bound (\ref{avern}) are uniform in time.

We recall that the term $\mathcal{I}_3$ was the only  one for which we needed to do the time average. Hence, proceeding in analogy to what we have done above,  by using the estimate on the time average 
$\langle E \rangle_{\Delta_{\ell-1}}$ on the larger time interval $\Delta_\ell$  we get for the term
(\ref{wer})
\begin{equation}
\begin{split}
\sum_h \sum_k \int_{t}^{t+\Delta_\ell} \mathcal{I}_3''(h,k)\, ds  \leq  C    {\cal{V}}^{\beta-2+\frac{\mu}{\tau-1}}
\frac{\Delta_\ell}{\Delta_\ell}
\leq 
C\frac{{\cal{V}}^{\frac43+\frac{\mu}{3(\tau-1)}-c}}{ {\cal{V}}^{(\ell-1)\delta}}  \,
\Delta_\ell
\end{split}
\end{equation}
and consequently
\begin{equation}
\langle E \rangle_{\Delta_\ell} \leq
C \left[  {\cal{V}}^{\gamma}  
 +  \frac{{\cal{V}}^{\frac43+\frac{\mu}{3(\tau-1)}-c}}{ {\cal{V}}^{(\ell-1)\delta}}  \right]
 \label{aver2}
\end{equation}
which proves the second step. Hence (\ref{avern}) is proved for any $\ell.$

\section{Proofs of Lemmas \ref{lemv3}, \ref{lemperp}, \ref{lemrect}}
\label{lemmas_proved}

Also in this section we will skip the index $N$ in the estimates, but it has to be reminded that the following estimates concern the partial dynamics.

\medskip

\begin{remark}
Before giving the proofs of the Lemmas  
 we observe that it holds:
\begin{equation}
\sup_{t\in[0, T-\Delta_\ell]}\langle E\rangle_{\Delta_\ell}\leq 
\sup_{t\in [0, T-\Delta_{\ell -1}]} \langle E\rangle_{\Delta_{\ell-1}} \qquad \forall \ell \leq \bar{\ell}.\label{ave}
\end{equation}
In fact, 
$\Delta_\ell=\mathcal{G}\Delta_{\ell-1}$ with $\mathcal{G}$ given in (\ref{effe_t}), so that:
\begin{equation*}
[t,t+\Delta_{\ell}]=\bigcup_{i=1}^\mathcal{G}[t+(i-1)\Delta_{\ell-1}, t+i\Delta_{\ell-1}],
\end{equation*}
hence
\begin{equation}
\frac{1}{\Delta_{\ell}}\int_t^{t+\Delta_{\ell}}|E(X(s),s)|ds \leq \max_i \frac{1}{\Delta_{\ell-1}}\int_{t+(i-1)\Delta_{\ell-1}}^{t+i\Delta_{\ell-1}}|E(X(s),s)|ds,
\end{equation}
whence we get (\ref{ave}).

\label{rem_av}
\end{remark}

\bigskip

\noindent \textbf{Proof of Lemma \ref{lemv3}}.

\medskip

We give first the proof for $\ell=1$, that is $\Delta_\ell=\Delta_1$.

\noindent Since the magnetic force and the external field $U$ give no contribution to the third component of the velocity, 
by (\ref{C1})  and (\ref{d1}) we get, for any $s\in[t, t+\Delta_1]$,
\begin{equation*}
\begin{split}
|V_3(s) - W_3(s)| &\leq |V_3(t)-W_3(t)| +\\& \int_{t}^{t+\Delta_1} \Big[ |E(X(s),s)| + |E(Y(s),s)| \Big] ds 
\leq \\& {\cal{V}}^{-\eta} + 2 C_6 {\cal{V}}^{\frac43+\frac{\mu}{3(\tau-1)}}  \Delta_1 \leq 2 {\cal{V}}^{-\eta}.
\end{split}
\end{equation*}

Analogously we prove the second statement:
\begin{equation*}
\begin{split}
|V_3(s)-W_3(s)|&\geq |V_3(t)-W_3(t)|-\\&\int_{t}^{t+\Delta_1}
\Big[ |E(X(s),s)| +|E(Y(s),s)| \Big] ds
\geq\\& {\cal{V}}^{-\eta} - 2 C_6 {\cal{V}}^{\frac43+\frac{\mu}{3(\tau-1)}}   \Delta_1
\geq \frac12 {\cal{V}}^{-\eta}.
\end{split}
\end{equation*}

\medskip

We show now that Lemma \ref{lemv3} holds true also over a time interval $\Delta_\ell$, $\ell > 1$, supposing for the electric field
the estimate (\ref{avern}) at level $\ell - 1$ (that is, only Lemma \ref{lemv3} at level less than $\ell$ is needed
to establish (\ref{avern}) at level $\ell$).  By Remark \ref{rem_av},   proceeding as before we get  for any $s\in[t, t+\Delta_\ell]$,  
\begin{equation}
\begin{split}
|&V_3(s) - W_3(s)| \leq |V_3(t)-W_3(t)| +\\& \int_{t}^{t+\Delta_\ell} \Big[ |E(X(s),s)| + |E(Y(s),s)| \Big] ds 
\leq \\&{\cal{V}}^{-\eta} +  C   
\left[  {\cal{V}}^{\gamma} +  
\frac{{\cal{V}}^{\frac43+\frac{\mu}{3(\tau-1)} - c} }{ {\cal{V}}^{\delta(\ell-2)}} 
  \right]  
  \frac{{\cal{V}}^{\delta(\ell-1)}}{4 C_6  {\cal{V}}^{\frac43+\frac{\mu}{3(\tau-1)}+\eta} } \leq \\ 
	&{\cal{V}}^{-\eta} +  C {\cal{V}}^{\gamma-\frac43-\frac{\mu}{3(\tau-1)}-\eta+\delta(\ell-1)}  
	+C {\cal{V}}^{\delta-\eta-c}    \leq \\
	&{\cal{V}}^{-\eta} + C {\cal{V}}^{-\eta-\varepsilon}   \leq 2 {\cal{V}}^{-\eta}, \end{split}
\label{vareps}
\end{equation}
(for a suitable small $\varepsilon>0$)
by  (\ref{Lt}) and the choice of the parameters made in (\ref{para}), considered that the constant
$C_3$ appearing in the definition of $\mathcal{V}$ (\ref{maxV}) is sufficiently large.

We proceed analogously for the lower bound. 

\bigskip

\noindent  \textbf{Proof of Lemma \ref{lemperp}}.

\medskip

We begin with the case $\ell=1$, that is $\Delta_\ell=\Delta_1$.

\noindent 
We prove the thesis by contradiction. Assume that there exists a time interval $[t^*,t^{**}]\subset [t, t+\Delta_1)$, 
such that $|V_{\perp}(t^*)|= {\cal{V}}^{\xi},$  $|V_{\perp}(t^{**})|= 2{\cal{V}}^{\xi}$  and  
${\cal{V}}^{\xi}< |V_{\perp}(s)|< 2{\cal{V}}^{\xi} \ \ \forall s\in (t^*,t^{**}).$ By the definition of $B$ it is:
\begin{equation}
\begin{split}
\frac{d}{dt}V_{\perp}^2(t)=2V_{\perp}(t)\cdot E(X(t),t)-2\nabla U(X(t))\cdot V_{\perp}(t), \label{perp}\end{split}
\end{equation}
so that, by  (\ref{nablaW'}) and (\ref{C1})  (recalling it is
$\frac{\mu+1}{\tau-1}<\frac49$)
\begin{equation}
\begin{split}
4{\cal{V}}^{2\xi}=|V_{\perp}(t^{**})|^2\leq \ |V_{\perp}(t^{*})|^2&+2\int_{t^*}^{t^{**}}  ds \, |V_{\perp}(s)|\,
|E(X(s),s)|  \\
& +2 \int_{t^*}^{t^{**}}  ds \, |V_{\perp}(s)|\,
|\nabla U(X(s))| \leq \\
& {\cal{V}}^{2\xi}+4{\cal{V}}^{\xi}\int_{t^*}^{t^{**}} ds \, | E(X(s),s)|  \\
& + C \, {\cal{V}}^{\xi + \frac{\mu+1}{\tau -1}} \Delta_1 \leq  \\
& {\cal{V}}^{2\xi}+4{\cal{V}}^{\xi} \Delta_1  C_6 {\cal{V}}^{\frac43+\frac{\mu}{3(\tau-1)}}  + C \, {\cal{V}}^{\xi + \frac{\mu+1}{\tau -1}} \Delta_1         < \\
&2 {\cal{V}}^{2\xi}.
\end{split}
\label{app1}
\end{equation}
Hence it is absurd to assume $|V_{\perp}(t^{**})|= 2{\cal{V}}^{\xi}$, on the contrary $|V_{\perp}|$
will remain smaller than $2{\cal{V}}^{\xi}$ in the whole interval $[t, t+\Delta_1]$, thus proving (\ref{L3}).

Now we prove (\ref{L4}).  As before, assume that there exists a time interval $[t^*,t^{**}]\subset [t, t+\Delta_1)$, 
such that 
$|V_{\perp}(t^*)|= {\cal{V}}^{\xi},$  $|V_{\perp}(t^{**})|= \frac12{\cal{V}}^{\xi}$  and  
$\frac12{\cal{V}}^{\xi}< |V_{\perp}(s)|< {\cal{V}}^{\xi}$   $\, \forall s\in (t^*,t^{**})$. Then from (\ref{perp})  it follows, by (\ref{nablaW'}) and (\ref{C1})  
\begin{equation}
\begin{split}
\frac14{\cal{V}}^{2\xi}=|V_{\perp}(t^{**})|^2\geq \ |V_{\perp}(t^{*})|^2&-2\int_{t^*}^{t^{**}}  ds \, |V_{\perp}(s)|\,
|E(X(s),s)|  \\
& -2 \int_{t^*}^{t^{**}}  ds \, |V_{\perp}(s)|\,
|\nabla U(X(s))| \geq \\
& {\cal{V}}^{2\xi}-2{\cal{V}}^{\xi}\int_{t^*}^{t^{**}} ds \, | E(X(s),s)|  \\
& - C \, {\cal{V}}^{\xi + \frac{\mu+1}{\tau -1}} \Delta_1 \geq  \\
& {\cal{V}}^{2\xi}-2{\cal{V}}^{\xi} \Delta_1  C_6 {\cal{V}}^{\frac43+\frac{\mu}{3(\tau-1)}}  - C \, {\cal{V}}^{\xi + \frac{\mu+1}{\tau -1}} \Delta_1      >  \\
 &\frac12 {\cal{V}}^{2\xi}
\end{split}
\label{app2}
\end{equation}
and, analogously to the previous case,  the contradiction proves the thesis.

\medskip

The same argument works also in an interval $[t, t+\Delta_\ell]$, $\ell > 1$,
assuming for the electric field
the estimate (\ref{avern}) at level $\ell - 1$.  In fact we have the bound
(see (\ref{vareps}))
$$
\langle E \rangle_{\Delta_{\ell-1}} \, \Delta_\ell  \leq C  {\cal{V}}^{-\eta-\varepsilon}
$$
  and by (\ref{nablaW'})
$$
|\nabla U(X(t))| \, \Delta_\ell \leq C {\cal{V}}^{-\eta-c}
$$
 which used in (\ref{app1}) and (\ref{app2})  allows to achieve the proof.

\bigskip

\noindent  {\bf{Proof of Lemma \ref{lemrect}}}.

\medskip

We treat first the case $\ell=1$, that is $\Delta_\ell=\Delta_1$.

\noindent Let $t_0\in [t, t+\Delta_1]\subset [0,T]$ be the time at which $|X_3(s)-Y_3(s)|$ has the minimum value. We put $\Gamma(s)=X_3(s)-Y_3(s)$.  Moreover we define the function
$$
\bar{\Gamma}(s)=\Gamma(t_0)+ \dot{\Gamma}(t_0)(s-t_0).
$$
Since the magnetic force and the external field $U$ do not act on the third component of the velocity it is:
$$
\ddot{\Gamma}(s)-\ddot{\bar{\Gamma}}(s)=E_3(X(s),s)-E_3(Y(s),s)
$$
$$
\Gamma(t_0)=\bar{\Gamma}(t_0), \quad  \dot{\Gamma}(t_0)=\dot{\bar{\Gamma}}(t_0)
$$
from which it follows
$$
\Gamma(s)=\bar{\Gamma}(s)+\int_{t_0}^s d\tau \int_{t_0}^\tau d\xi \ \big[ E_3(X(\xi),\xi)-E_3(Y(\xi),\xi) \big].
$$
By  (\ref{C1})
\begin{equation}
\begin{split}
\int_{t_0}^s d\tau\int_{t_0}^\tau d\xi \,& |E_3(X(\xi),\xi)-E_3(Y(\xi),\xi)|\leq 2C_6 {\cal{V}}^{\frac43+\frac{\mu}{3(\tau-1)}} 
\frac{|s-t_0|^2}{2}\leq 
 \\
&C_6 {\cal{V}}^{\frac43+\frac{\mu}{3(\tau-1)}}  \Delta_1  |s-t_0|\leq\frac{ |s-t_0|}{4}. 
\end{split}
\label{eq_app}
\end{equation}
 Hence,
\begin{equation}
 |\Gamma(s)|\geq |\bar{\Gamma}(s)|-\frac{|s-t_0|}{4}.
\label{z}
\end{equation}
Now we have:
$$
|\bar{\Gamma}(s)|^2=|\Gamma(t_0)|^2+2\Gamma(t_0)\dot{\Gamma}(t_0)(s-t_0)+|\dot{\Gamma}(t_0)|^2 |s-t_0|^2.
$$
We observe that $\Gamma(t_0) \dot{\Gamma}(t_0) (s-t_0)\geq 0.$  Indeed, if $t_0 \in(t, t+\Delta_1)$ 
then $\dot{\Gamma}(t_0)=0$ while if $t_0=t$ or $t_0=t+\Delta_1$ the product $\Gamma(t_0) \dot{\Gamma}(t_0) (s-t_0)\geq 0$.
Hence
$$
|\bar{\Gamma}(s)|^2\geq |\dot{\Gamma}(t_0)|^2 |s-t_0|^2.
$$
By Lemma \ref{lemv3} (adapted to this context with a factor $h\geq 1$), since $t_0\in [t, t+\Delta_1]$ it is
$$
|\dot{\Gamma}(t_0)|\geq h \frac{{\cal{V}}^{-\eta}}{2}
$$
hence
$$
|\bar{\Gamma}(s)|\geq h \frac{{\cal{V}}^{-\eta}}{2} |s-t_0|
$$
and finally by (\ref{z}), 
$$
 |\Gamma(s)|\geq h \frac{{\cal{V}}^{-\eta}}{4}|s-t_0|.
 $$
From this the thesis  follows, since obviously
$
|X(s)-Y(s)|\geq |\Gamma(s)|.$

\medskip

By the same argument used at the end of the proof of Lemma \ref{lemv3}, we see that the same proof works also 
considering the interval $[t, t+\Delta_\ell]$,  $\ell > 1$
and assuming  for the electric field
the estimate (\ref{avern}) at level $\ell - 1$.

\bigskip

\bigskip

\bigskip

\noindent$\mathbf{Acknowledgments.}$  Work performed under the auspices of GNFM-INDAM and the Italian Ministry of the University (MIUR).

\noindent One of the authors acknowledges the MIUR Excellence Department Project awarded to the Department of Mathematics, University of Rome Tor Vergata CUP E83C18000100006.

\bigskip

\bigskip

\bigskip

\section*{Appendix}
\appendix

\setcounter{equation}{0}

\def\theequation{A.\arabic{equation}}

We give here a simple physical model which motivates a potential $U$ of the form (\ref{campo_W}) for $\mu=1$.
We can think for example to put some   point charges (with charge $-1$) in the plane $x_1=0$, kept fixed at some points.
We only sketch the main features of such resulting potential, analyzing
the one originated by a single negative charge put at the origin. Considering the coulomb potential produced
by such charge, equation (\ref{compo2}), for the second component of the equation of motion
of a plasma particle, will be modified into
\begin{equation}
\begin{split}
\dot V_2^N(t) =& -  V_1^N(t) \,  h\left(X_1^N(t)\right) + E_2^N(X^N(t),t) - \frac{X_2^N(t)}{|X^N(t)|^3} \\
=& -\frac{d}{dt} {\mathcal{H}}\left(X_1^N(t)\right)
+ E_2^N(X^N(t),t) - \frac{X_2^N(t)}{|X^N(t)|^3}
\end{split}
\end{equation}
which integrated in time, 
\begin{equation}
\begin{split}
V_2^N(t)-V_2^N(0) =& \, -{\mathcal{H}}\left(X_1^N(t)\right)
+ {\mathcal{H}}\left(X_1^N(0)\right)  \\
&+ \int_0^t  E_2^N(X^N(s),s) \, ds  - \int_0^t \frac{X_2^N(s)}{|X^N(s)|^3} \, ds.
\end{split}
\label{shield}
\end{equation}
Analogously to what done after (\ref{compo2})
(recalling Proposition \ref{field}),
\begin{equation}
\begin{split}
|{\mathcal{H}}\left(X_1^N(t)\right)| \leq&  \,C {\mathcal{V}}^N(t) + \int_0^t \frac{|X_2^N(s)|}{|X^N(s)|^3} \, ds \\
\leq& \,C {\mathcal{V}}^N(t) + \int_0^t \frac{1}{|X^N(s)|^2} \, ds \\
\leq& \, C {\mathcal{V}}^N(t) + \int_0^t \frac{1}{|X_1^N(s)|^2} \, ds 
\label{ef}
\end{split}
\end{equation} 
from which, using the form of the magnetic field which holds near the border (\ref{campo_B2}),
\begin{equation}
\frac{1}{\left(X_1^N(t)\right)^{\tau-1}} \leq C {\mathcal{V}}^N(t) + C \int_0^t \left(\frac{1}{\left(X_1^N(s)\right)^{\tau-1}}\right)^{\frac{2}{\tau-1}} \, ds.
\end{equation}
From this, since $\frac{2}{\tau-1}<1$, we get
\begin{equation}
\frac{1}{\left(X_1^N(t)\right)^{\tau-1}} \leq C {\mathcal{V}}^N(t)
\end{equation}
which is the same estimate (\ref{min_dist}) obtained before, which we used to deduce the confinement of the plasma
in the region $x_1>0$.

\bigskip

\end{document}